\begin{document}

\title{A high-order augmented basis positivity-preserving discontinuous Galerkin method for a Linear Hyperbolic Equation}

\titlerunning{Augmented DG method for linear hyperbolic equation}        

\author{
	Maurice S. Fabien
}


\institute{Maurice S. Fabien \at
              Department of Mathematics, University of Wisconsin, Madison, WI 53706, USA \\
              \email{mfabien@wisc.edu}           
}

\date{Received: date / Accepted: date}

\maketitle

\begin{abstract}
This paper designs a high-order positivity-preserving discontinuous Galerkin (DG) scheme for a linear hyperbolic equation.  The scheme relies on augmenting the standard polynomial DG spaces with additional basis functions.  The purpose of these augmented basis functions is to ensure the preservation of a positive cell average for the unmodulated DG solution.  As such, the simple Zhang and Shu limiter~\cite{zhang2010maximum} can be applied with no loss of accuracy for smooth solutions, and the cell average remains unaltered. A key feature of the proposed scheme is its implicit generation of suitable augmented basis functions. Nonlinear optimization facilitates the design of these augmented basis functions. Several benchmarks and computational studies demonstrate that the method works well in two and three dimensions. 
\keywords{discontinuous Galerkin \and High-order \and Positivity-preserving}
\subclass{MSC code1 \and MSC code2 \and more}
\end{abstract}

\section{Introduction and literature review}
The main objective of this paper is to design a high-order positivity-preserving discontinuous Galerkin (DG) scheme for a linear hyperbolic equation (see equation \eqref{eq:model}). The main idea is to enrich the standard finite element spaces with additional functions. The main purpose for these functions is to ensure that the resulting cell average from the augmented DG scheme preserves a positive cell average.

In~\cite{ling2018conservative}, the authors developed a DG method for the same linear hyperbolic problem (with positive inflow conditions and source functions), which relied on augmenting the standard polynomial spaces $\mathcal{P}_1$ (the space of polynomials of total degree less than or equal to one) and $\mathcal{Q}_1$ (the space of polynomials of degree less than one in each space direction) with additional basis functions.  By doing so, they demonstrated that the resulting scheme preserves a positive cell average for the unmodulated DG solution. Here, unmodulated DG solution refers to a DG solution is not modified in any way through post-processing (such as slope limiting, flux correct transport, or simple `fix-up'/'clipping', see \cite{yee2019quadratic}). The present work is also an extension of \cite{fabien2024positivity}, where similar ideas are explored. In particular, explicit non-polynomial augmented basis functions are defined based on the method of characteristics. Then, these augmented basis functions are used to enrich the standard polynomial spaces $\mathcal{P}_k$ and $\mathcal{Q}_k$. These basis functions are specifically constructed to ensure that the resulting scheme preserves a positive cell average for the DG solution without limiters. As such, the straightforward scaling limiter from~\cite{zhang2010maximum} can be applied without any loss of accuracy for smooth solutions, and the cell average remains unaltered.
 
Bound-preserving schemes have received considerable attention in recent years. Various approaches have been developed to address this challenge. An explicit positivity-preserving scheme that utilizes Lax-Wendroff time stepping was developed in \cite{rossmanith2011positivity}.

Operator splitting methods, particularly Strang splitting, have proven effective in the context of bound-preserving schemes. In \cite{rossmanith2011positivity}, the authors combine Strang operator splitting with a discontinuous Galerkin spatial discretization to achieve high-order accuracy in both space and time for the Vlasov-Possion equations. Similarly, Zhang et al.\cite{zhang2016operator} applied operator splitting with a discontinuous Galerkin discretization to a chemotaxis PDE model - by exploiting the splitting, they are able to retain bound-preservation. These operator splitting approaches are particularly valuable as they can achieve temporal accuracy beyond second order.

For parabolic and convection-diffusion systems, several methods have been proposed in \cite{srinivasan2018positivity,guo2015positivity,wen2016application}. Additionally, techniques that combine $hp$-aptive refinement, and composing different limiters has demonstrated to be effective for nonlinear problems on unstructured meshes \cite{kontzialis2013high}. 
 
	Limiters have also been designed and applied to a variety of other linear and nonlinear problems. For instance, Vlasov-Boltzmann transport equations \cite{cheng2012positivity}, extended magnetohydrodynamics equations \cite{zhao2014positivity}, nonlinear shallow water equations   \cite{li2017positivity}, Gaussian closure equations \cite{meena2017positivity}, and radiative transfer equations \cite{zhang2019high}, and compressible Navier--Stokes equations \cite{zhang2019high}.
   
   There also exist a number of works which utilize various optimization techniques to enforce bound constraints. For semi-implicit and operator splitting schemes, see \cite{liu2024optimizationbased,liu2024optimizationbased3,LIU2024113440}. An efficient active set semismooth Newton method was developed for the fully time-implicit DG discretization of nonlinear problems in \cite{van2019positivity}. In general, provable bound-preserving schemes for fully time-implicit DG methods still remains a challenge - even for linear PDEs.
   
 We note that this list of references on limiters is far from complete. It is quite challenging to prove that the resulting schemes retain high-order accuracy, positivity, as well as preserve other desirable quantities (such as: momentum, energy, and so on). Time-dependent problems introduce additional challenges if implicit time-stepping is considered. In practice, implicit time-stepping gives rise to nonlinear systems that need to be resolved via iterations such as Newtons method. Imposing additional constraints on the nonlinear system can cause issues with convergence \cite{nocedal2006numerical}.
  
In this paper, we extend the results in~\cite{fabien2024positivity,ling2018conservative} by considering the case of polynomial degree $k>1$, and also develop a new strategy related to the work in \cite{fabien2024positivity}.
The work done in \cite{ling2018conservative} produces an explicitly formed defined augmented basis function, which when augmented to the standard DG approximation spaces, results in a positive cell average for the unmodulated DG solution. However, the results are only valid for $k=1$. In~\cite{fabien2024positivity}, an explicitly defined augmented basis function is constructed. Here, specific non-polynomial augmented basis functions are designed using the method of characteristics.   However, this work focuses on producing appropriate augmented basis functions on the fly via an optimization procedure. The advantage of this is that the augmented basis functions are guaranteed to be polynomials.  In particular, we additionally augment the standard DG spaces $\mathcal{P}_k$ (the space of polynomials of total degree less than or equal to $k$) and $\mathcal{Q}_k$ (the space of polynomials of degree less than $k$ in each space direction) with additional basis functions so that the resulting scheme is high-order accurate and preserves a positive cell average for the unmodulated DG solution. The so-called serendipity elements (see \cite{arnold2002approximation}) are also examined. 
  Our approach utilizes a numerical optimization procedure to generate suitable augmented basis functions 
automatically. This is done because it is difficult to a priori explicitly exhibit suitable augmented basis functions for $k>1$.  Some of the notable features of this approach are:
\begin{itemize}
	\item  The unmodulated DG solution for any $k>0$ will have a positive cell average for any Courant–Friedrichs–Lewy (CFL) number~\cite{courant1967partial,leveque1992numerical}.
\item  The resulting method can be used for non-negative variable coefficients.
\item  The algorithm is independent of the domain dimension, therefore, 2D and 3D meshes can be used.
\end{itemize} 
We mention some other related works. It is emphasized that the following three works do not consider function space augmentation. Optimization techniques have been used to enforce that the DG approximation is non-negative everywhere, for instance \cite{van2019positivity,yee2019quadratic}.  Here, a constrained optimization problem is solved where the entire DG solution is constrained to be bound-preserving. The difference between the current work is that we merely request that the cell average remains positive, and not the entire DG approximation. This reduced the computational effort, and relaxes the complexity of the underlying optimization problem.  Similarly, the works in \cite{liu2024optimizationbased,liu2024optimizationbased3,LIU2024113440} also leverage optimization procedures to obtain bound-preserving schemes, but they do not consider function space augmentation, which is the primary focus of the current work.

	In \cite{XU2022111410} the authors consider under-integration (inexact quadrature), using quadrature rules with reduced precision. By doing this, they show good results and some theoretical evidence that under-integration can produce positive cell averages for the unmodulated unaugmented DG solver. Of course, the reduced precision of the quadrature effects the overall accuracy for smooth solutions.
 
	The work in \cite{XU2023112304} explores new notions of conservation which are arguably more suitable for steady-state or implicit problems. It is found that the standard notion of conservation, preserving the cell average, may not be ideal. For instance, this may result in mass loss over long time scales. By redefining conservation, they are able to design high-order positivity-preserving DG schemes without the need for augmentation. The limitation is of course that the classical notion of conservation must be abandoned; and it is not immediately clear how to extend the results to nonlinear problems.


The current paper is organized as follows. Sections \ref{section:model} and \ref{section:model3D} describe the model problems and briefly introduce the DG discretization. The strategy for obtaining suitable augmented basis functions is introduced in Section \ref{section:procedure}. To ensure the overall DG approximation remains non-negative, the simple scaling limiter from~\cite{zhang2010maximum} is rehashed in Section \ref{section_limiter}. Section \ref{section_motivations} presents several examples motivating function space augmentation, as well as explicit basis functions. Numerical experiments are conducted in Section \ref{section:numerics}, which verify and validate the numerical optimization strategy. The paper is finalized with important findings and conclusions.
\section{Model problem and discretization in (1+1) dimensions}\label{section:model}
The two dimensional linear hyperbolic equation considered in this paper has the form
\begin{equation} 
  \frac{   \partial (\alpha u (x,y))}{\partial x}
+
  \frac{   \partial (\beta u(x,y)) }{\partial y}
+
\gamma u(x,y)
=
f(x,y),
\label{eq:model}
\end{equation}
where $\alpha,\beta>0, \,\gamma\ge 0,\, f(x,y)\ge0$, and the domain is $\mathcal{D} = [a,b]\times [c,d] \subset \mathbb{R}^2.$ If equation~\eqref{eq:model} is prescribed non-negative initial/inflow boundary conditions, then one can show that the solution $u$ will be non-negative (e.g., method of characteristics).  The computational domain is partitioned into a rectangular mesh with cells $S_{ij} = [x_{\scriptscriptstyle{i- \frac{1}{2}}},x_{\scriptscriptstyle{i+ \frac{1}{2}}}]\times [y_{\scriptscriptstyle{j- \frac{1}{2}}},y_{\scriptscriptstyle{j+ \frac{1}{2}}}]$, with $i=1,\ldots,N_x$ and $j=1,\ldots, N_y$.  Define the traditional DG space $V_h^k$ as
\[
V_h^k = \{v\in L^2(\mathcal{D}): v|_{S_{ij}} \in \mathcal{X}_k(S_{ij}),\,\forall i=1,\ldots,N_x,\, \forall j = 1,\ldots, N_y\},
\]
where $\mathcal{X}_k$ is a standard polynomial space such as $\mathcal{P}_k$ (space of all polynomials up to degree $k$), $\mathcal{Q}_k$ (space of tensor-product polynomials of degree $k$), or the serendipity space $\mathcal{S}_k$ (for details, see \cite{arnold2002approximation}) with polynomial degree $k\ge 1$.  The DG scheme for solving~\eqref{eq:model} then seeks $u \in V_h^k$ so that for all $v\in V_h^k$ we have
\begin{equation} 
\mathcal{L}(u,v) = \mathcal{R}(v),
\label{eq_origin}
\end{equation}
where the bilinear form $\mathcal{L}(\cdot,\cdot)$ and functional $\mathcal{R}(\cdot)$ are given by
\begin{subequations}
\begin{align} 
\mathcal{L}(u,v) &= -\int_{S_{ij}} u(x,y) (\alpha v_x(x,y) + \beta v_y(x,y))  \,dx \, dy
\\
&+
  \int_{y_{\scriptscriptstyle{j- \frac{1}{2}}}}^{y_{\scriptscriptstyle{j+ \frac{1}{2}}}} \alpha u(x^-_{ \scriptscriptstyle{i+ \frac{1}{2}}},y)  v(x^-_{i+\scriptscriptstyle\frac{1}{2}},y)     \,dy
\notag 
\\
&+
   \int_{x_{\scriptscriptstyle{i- \frac{1}{2}}}}^{x_{\scriptscriptstyle{i+ \frac{1}{2}}}} \beta u(x,y^-_{j+\scriptscriptstyle\frac{1}{2}})  v(x,y^-_{\scriptscriptstyle{j+\frac{1}{2}}})    \,dx
+
  \int_{S_{ij}} \gamma u(x,y) v(x,y) \,dx \,dy,
\notag 
\\
\mathcal{R}(v)&= \int_{S_{ij}} f(x,y) v(x,y) \,dx \,dy
+ 
  \int_{y_{\scriptscriptstyle{j- \frac{1}{2}}}}^{y_{\scriptscriptstyle{j+ \frac{1}{2}}}}
\alpha u(x^-_{ \scriptscriptstyle{i- \frac{1}{2}}},y) v(x^+_{ \scriptscriptstyle{i- \frac{1}{2}}},y)
\,dy
\\
&+
   \int_{x_{\scriptscriptstyle{j- \frac{1}{2}}}}^{x_{\scriptscriptstyle{j+ \frac{1}{2}}}}
 \beta u(x,y^-_{j-\scriptscriptstyle\frac{1}{2}})  v(x,y^+_{\scriptscriptstyle{j-\frac{1}{2}}})    \,dx  .
\notag 
\end{align}
\label{eq:dg_scheme}
\end{subequations}
Equation~\eqref{eq_origin} can be solved locally on each cell, with inflow boundary conditions.  In other words, inflow conditions on a given cell are outflow conditions from an adjacent cell or inflow from the domain boundary.

The cell average of a function $u$ on cell $S_{ij}$ is denoted by
\[
\overline{u}_{ij}
= 
\frac{1}{|S_{ij}|}\int_{S_{ij}} u \,dx\,dy,
\]
where $|S_{ij}|$ is the measure (area) of cell $S_{ij}$.

\section{Model problem and discretization in (1+2) dimensions}\label{section:model3D}
Consider the (1+2) dimension problem:
\begin{equation} 
\frac{\partial u (x,y,t)}{\partial t} + \alpha\frac{\partial  u (x,y,t)}{\partial x} + \beta \frac{\partial u (x,y,t)}{\partial y} + \gamma u(x,y,t) = f(x,y,t), 
\label{eq:model3D}
\end{equation}
with the space-time domain $(x,y,t) \in \mathcal{D} =  \Omega \times [0,T]$ (for some final time $T>0$), $\Omega\subset \mathbb{R}^2$.  Similar to the model problem in Section~\ref{section:model}, $\alpha,\beta>0$ and $\gamma\ge0$  are constants.  The PDE~\eqref{eq:model3D} is equipped with non-negative inflow boundary conditions, and it is assumed that $f\ge0$.
 
We employ a space-time DG method, so that the space-time slab is now a rectangular cuboid (instead of a rectangle).  In other words, the space-time domain $\Omega \times [0,T]$ is partitioned into a total of $N_x*N_y*N_t$ rectangular cuboids $S_{ij\ell}$ ($N_x$ cuboids in the $x$-direction, $N_y$ cuboids in the $y$-direction, $N_t$ cuboids in the $t$-direction).  The cuboid mesh then has cells $S_{ij\ell} = [x_{\scriptscriptstyle{i- \frac{1}{2}}},x_{\scriptscriptstyle{i+ \frac{1}{2}}}]\times [y_{\scriptscriptstyle{j- \frac{1}{2}}},y_{\scriptscriptstyle{j+ \frac{1}{2}}}]\times [t_{\scriptscriptstyle{\ell- \frac{1}{2}}},t_{\scriptscriptstyle{\ell+ \frac{1}{2}}}]$, with $i=1,\ldots,N_x$,  $j=1,\ldots, N_y$ and $\ell=1,\ldots, N_t$.

	The DG scheme for solving~\eqref{eq:model3D} then seeks $u \in V_h^k$ so that for all $v\in V_h^k$ we have  
\begin{align} 
&-\int_{S_{ij\ell}} u  (\alpha v_x  + \beta v_y + v_t - \gamma v)  \,dx \, dy\,dt
\label{eq:dg_scheme3D}
\\
&+
\alpha 
\int_{t_{\scriptscriptstyle{j- \frac{1}{2}}}}^{t_{\scriptscriptstyle{j+ \frac{1}{2}}}} 
\int_{y_{\scriptscriptstyle{j- \frac{1}{2}}}}^{y_{\scriptscriptstyle{j+ \frac{1}{2}}}} 
u(x^-_{ \scriptscriptstyle{i+ \frac{1}{2}}},y,t)  v(x^-_{i+\scriptscriptstyle\frac{1}{2}},y,t)     \,dy \,dt
\notag 
\\
&+
\beta  
\int_{t_{\scriptscriptstyle{j- \frac{1}{2}}}}^{t_{\scriptscriptstyle{j+ \frac{1}{2}}}} 
\int_{x_{\scriptscriptstyle{j- \frac{1}{2}}}}^{x_{\scriptscriptstyle{j+ \frac{1}{2}}}} u(x,y^-_{i+\scriptscriptstyle\frac{1}{2}},t)  v(x,y^-_{\scriptscriptstyle{i+\frac{1}{2}}},t)    \,dx\,dt
\notag 
\\
&+
\int_{y_{\scriptscriptstyle{j- \frac{1}{2}}}}^{y_{\scriptscriptstyle{j+ \frac{1}{2}}}} 
\int_{x_{\scriptscriptstyle{j- \frac{1}{2}}}}^{x_{\scriptscriptstyle{j+ \frac{1}{2}}}} u(x,y,t^-_{\scriptscriptstyle{i+ \frac{1}{2}}})  v(x,y,t^-_{\scriptscriptstyle{i+\frac{1}{2}}})    \,dx\,dy
\notag 
\\
&= \int_{S_{ij\ell}} f  v  \,dx \,dy\,dt
+
\alpha 
\int_{t_{\scriptscriptstyle{j- \frac{1}{2}}}}^{t_{\scriptscriptstyle{j+ \frac{1}{2}}}}
\int_{y_{\scriptscriptstyle{j- \frac{1}{2}}}}^{y_{\scriptscriptstyle{j+ \frac{1}{2}}}}
u(x^-_{ \scriptscriptstyle{i- \frac{1}{2}}},y,t) v(x^+_{ \scriptscriptstyle{i- \frac{1}{2}}},y,t)
\,dy\,dt
\notag 
\\
&+ 
\beta 
\int_{t_{\scriptscriptstyle{\ell- \frac{1}{2}}}}^{t_{\scriptscriptstyle{\ell+ \frac{1}{2}}}}
 \int_{x_{\scriptscriptstyle{i- \frac{1}{2}}}}^{x_{\scriptscriptstyle{i+ \frac{1}{2}}}}
 u(x,y^-_{j-\scriptscriptstyle\frac{1}{2}},t)  v(x,y^+_{\scriptscriptstyle{j-\frac{1}{2}}},t)    \,dx  \,dt
 \notag 
\\
&+ 
\int_{y_{\scriptscriptstyle{j- \frac{1}{2}}}}^{y_{\scriptscriptstyle{j+ \frac{1}{2}}}}
 \int_{x_{\scriptscriptstyle{i- \frac{1}{2}}}}^{x_{\scriptscriptstyle{i+ \frac{1}{2}}}}
 u(x,y,t^-_{\ell-\scriptscriptstyle\frac{1}{2}})  v(x,y,t^+_{\scriptscriptstyle{\ell-\frac{1}{2}}})    \,dx  \,dy 
 .
\notag 
\end{align} 

The cell average of a function $u$ on cell $S_{ij\ell}$ is denoted by
\[
\overline{u}_{ij\ell}
= 
\frac{1}{|S_{ij\ell}|}\int_{S_{ij\ell}} u \,dx\,dy \,dt,
\]
where $|S_{ij\ell}|$ is the measure (volume) of cell $S_{ij\ell}$.

\section{Procedure for high-order positivity-preserving scheme}\label{section:procedure}
 It is possible to augment standard DG spaces with additional basis functions so that high-order accuracy is retained and the cell average is guaranteed to be positive.  In~\cite{ling2018conservative}, the authors augment the standard $\mathcal{P}_1$ space by adding additional basis functions which guarantee that the cell average of the unmodulated DG solution remains positive.  
As long as the inflow conditions and source are also non-negative, the cell average of the unmodulated DG solution is guaranteed to be non-negative (which can be deduced from~\eqref{eq:dg_scheme}).
  
The focus of this paper is to extend the results in~\cite{ling2018conservative}.  More specifically, assuming positive inflow and sources, we augment the standard DG spaces ($k>1$) so that the resulting DG scheme is high-order and preserves a positive cell average for the unmodulated DG solution.  This is accomplished by utilizing an optimization algorithm which numerically searches for appropriate basis functions to augment the standard DG spaces.  

Here we search for a \textit{single} basis function to augment the standard DG space with.  The augmented basis function (denoted by $\psi$) is treated as an ansatz (we select a polynomial of degree $r>k$), and search for coefficients of $\psi$ (via nonlinear programming) which ensure that the cell average of the unmodulated DG solution remains positive. 

\subsection{Augmenting polynomial basis}
Here we provide an algorithm that generates a special basis function; such that, when it is augmented to the standard DG spaces, it will preserve a positive cell average for the unmodulated DG solution.  Let $\mathcal{X}_k$ be a traditional polynomial space (e.g., $\mathcal{P}_k$, $\mathcal{Q}_k$, or the serendipity space $\mathcal{S}_k$).  Following~\cite{ling2018conservative}, for $k\ge 1$, we augment $\mathcal{X}_k$ with an additional basis function, denoted by $\psi $.  The polynomial $\psi$ will belong to some higher-order space $\mathcal{X}_r(S_{ij})$ with $r>k$. That is, $\psi \in \mathcal{X}_r(S_{ij})\backslash\mathcal{X}_k(S_{ij})$.  The space $\mathcal{X}_k$ augmented with $\psi$ is denoted by $\mathcal{ \widetilde{X}}_k = \mathcal{X}_k \cup \text{span}(\psi)$.  Let $\widetilde{V}_h^k$ be the augmented DG space:
\[
\widetilde{V}_h^k = \{v\in L^2(\mathcal{D}): v|_{S_{ij}} \in \mathcal{ \widetilde{X}}_k(S_{ij}),\,\forall i=1,\ldots,N_x,\, \forall j = 1,\ldots, N_y\} .
\] 
We search for some higher-order polynomial $\psi$ that will generate a test function $\widetilde{v} \in \widetilde{V}_h^k$ which simultaneously satisfies $ \widetilde{v}  \ge 0$ and
\begin{equation} 
 \mathcal{L}(u,\widetilde{v}) = \int_{S_{ij}}u\,dx\,dy,  
\label{eq:LHS_avg}
\end{equation}
 for any $u\in \widetilde{V}_h^k$.
 
 
Let $\Phi_i$ be the $i$th basis function for $\mathcal{X}_r$, with $1\le i\le \dim{\mathcal{X}_r}$.  In addition, set $\psi = \sum_{i=1}^{ \dim{\mathcal{X}_r} } \vec{d}_i \Phi_i,$ and $\vec{d} =[d_1,d_2,\ldots,d_{\dim{\mathcal{X}_r}}]^T$.  Then, we pose the following optimization problem: find a test function $\psi:= \widetilde{v}\in \widetilde{\mathcal{X}}_k(S_{ij})$ (not necessarily unique) such that
\begin{subequations}
\begin{alignat}{2}
&\!\min_{ \vec{d} }        &\qquad& \|\vec{d}\|^2_2 \label{eq:optProb}\\
&\text{subject to} &      & \widetilde{v}(x,y) \geq 0, ~(x,y)\in S_{ij},\label{eq:constraint1}\\
&                  &      &  \mathcal{L}(\widetilde{u},\widetilde{v})=\int_{S_{ij}}\widetilde{u}\,dx\,dy, \quad \forall \widetilde{u} \in \widetilde{\mathcal{X}}_k(S_{ij}),\label{eq:constraint2}.
\end{alignat}
\label{eq:optimization_algorithm}
\end{subequations}
In full generality, this optimization problem can be numerically resolved on each troubled cell locally by using nonlinear programming techniques such as the interior point method~\cite{byrd2000trust}.
 
	Of course, this is an ansatz, we \textit{assume} that there is such a polynomial $\psi$ that satisfies Problem~\eqref{eq:optimization_algorithm}.  In general, it is difficult to verify that Problem~\eqref{eq:optimization_algorithm} has a solution (e.g., establishing the Karush–Kuhn–Tucker conditions~\cite{nocedal2006numerical}).  Numerically we are able to obtain feasible solutions to Problem~\eqref{eq:optimization_algorithm}.
 
We next describe some steps of Problem~\eqref{eq:optimization_algorithm}.  It should be noted that this procedure can be preprocessed, and applied to a reference cell (say, the square $[-1,1]^2$ in 2D) instead of the physical cell $S_{ij}$.  In practice, the DG scheme in~\eqref{eq_origin} is typically resolved through quadrature; and the approximate solution to is only desired on some subset of $S_{ij}$, for instance quadrature or interpolation points. Let $\mathcal{G} \subset S_{ij}$ be this set of finitely many points (e.g., tensor product Gauss quadrature rules of sufficiently large precision).  Each optimization step of Problem~\eqref{eq:optimization_algorithm} starts with an initial guess for $\vec{d}$ (the coefficients for the candidate augmented basis function $\psi$). Next, the augmented space is formed $\widetilde{\mathcal{X}}_k(S_{ij}) =  {\mathcal{X}}_k(S_{ij}) \cup \,\textrm{span}\{\psi\}$, and the linear system in~\eqref{eq:constraint2} (which has size $1+\dim\mathcal{X}_k$) is assembled and solved for $\widetilde{v}\in \widetilde{\mathcal{X}}_k(S_{ij})$. If $\widetilde{v}$ is nonzero on the desired set of points $\mathcal{G}$, then the optimization step terminates. Otherwise, a new candidate for $\vec{d}$ is generated and the process repeats until $v\ge0$ on the set $G$.  For all numerical experiments and results in this paper, we take $\mathcal{G}$ to be the standard tensor product Gauss quadrature points~\cite{kovvali2011theory}, with enough points to ensure $\int_{S_{ij}} \widetilde{u}\widetilde{v} $ with $\widetilde{u},\widetilde{v}\in \widetilde{\mathcal{X}}_k(S_{ij})$ to be integrated exactly.
 
	Ideas that incorporate optimization in positivity-preserving DG schemes have been considered in~\cite{van2019positivity} and~\cite{yee2019quadratic}.  These works use tools from optimization to design maximum-principle preserving limiters. 
Essentially, they solve a constrained optimization problem where the entire solution is constrained (or modified) to be nonnegative.  In contrast, our approach seeks to maintain a positive cell average.  The optimization procedure in Problem~\eqref{eq:optimization_algorithm} searches for a suitable basis function to augment $\mathcal{X}_k$.  With appropriate augmented basis functions for the standard DG space, we will have that the cell average of the unmodulated DG solution remains positive.  Then, if the unmodulated DG solution is negative, one can apply the scaling limiter described in~\cite{zhang2010maximum}.  As such, the proposed approach has a provably high-order accurate limiter, and, the cell average remains unaltered.

Assuming a troubled cell ($\overline{u}_{ij}<0$), the overall procedure is summarized below:
\begin{enumerate}  
\item  Pick an ansatz polynomial $\psi \in \mathcal{X}_r$ with degree $r>k$, $\psi = \sum_{i=1}^{|\mathcal{X}_r|} \vec{d}_i \Phi_i$.
\item  Numerically solve Problem~\eqref{eq:optimization_algorithm} to determine viable coefficients $\vec{d}$ for $\psi$. 
\item  
	Augment the standard $\mathcal{X}_k$ space with $\psi$ (denote the augmented space by $\mathcal{ \widetilde{X}}_k$); replace $ {V}_h^k$ with $\widetilde{V}_h^k$ in~\eqref{eq:model}.  Then, solve~\eqref{eq:model} for $u_{ij}\in \widetilde{V}_h^k$, where 
\[  
\widetilde{V}_h^k = \{v\in L^2(\mathcal{D}): v|_{S_{ij}} \in \mathcal{ \widetilde{X}}_k(S_{ij}),\,\forall i=1,\ldots,N_x,\, \forall j = 1,\ldots, N_y\} .
\]
\item  If $v<0$, limit the approximation.  Apply the scaling limiter~\cite{zhang2010maximum}, $\tilde{u}_{ij}(x,y) = \theta(u_{ij}(x,y)-\overline{u}_{ij}) + \overline{u}_{ij}$.
\end{enumerate} 
 
Through various numerical experiments, it is found that $r$ is not substantially larger than $k$.  If $\psi$ is be to valid for only one cell $S_{ij}$, then $r=k+1$ appears to suffice.  In the case that it is desirable for a given $\psi$ to work for multiple cells (all cells obeying some Courant–Friedrichs–Lewy (CFL) upper/lower bound), then larger values of $r$ seem to be necessary. This paper utilizes Problem~\eqref{eq:optimization_algorithm} as stated, but there are many ways to modify the search for suitable augmented functions $\psi$.  For instance, one could use non-negative basis functions and enforce $\vec{d}_i\ge 0$.  Other possibilities include (but are not limited to) modifying the objective function (e.g., log barrier/penalty methods for remote initial guesses), augmenting multiple basis functions to a standard DG space, or requesting $\psi$ posses certain properties (e.g., $\psi$ vanishes at the outflow boundaries).

This approach may seem prohibitively expensive, however, Problem~\ref{eq:optimization_algorithm} only needs to be invoked on troubled cells (where the unmodulated cell average is negative).  In other words, we use an adaptive scheme, where the standard $\mathcal{X}_k$ space is used when the resulting cell average is non-negative, and Problem~\ref{eq:optimization_algorithm} is only activated when the computed cell average from $\mathcal{X}_k$ is negative.  Typically this only occurs on a subset of the mesh.  It should be also noted that~\eqref{eq:dg_scheme} is usually approximated using quadrature rules.  As such, we may weaken the constraints of Problem~\eqref{eq:optimization_algorithm}, so that $v$ is only non-negative at certain quadrature or control points on a cell (instead of non-negative over the entire cell).  Moreover, due to the local nature of the DG scheme, Problem~\ref{eq:optimization_algorithm} can be leveraged on a single cell. We also demonstrate that it is possible to pre-process suitable $\psi$ that is valid for all cells obeying a CFL-type bound. Furthermore, Problem~\ref{eq:optimization_algorithm} is extendable to higher dimensional Cartesian meshes. 

\subsection{Positivity-preserving DG scheme in (1+2) dimensions}
To arrive at a positivity-preserving DG scheme for~\eqref{eq:dg_scheme3D}, we use the approach from Section~\ref{section:procedure}.  Let $\mathcal{L}(u,v)$ denote the bilinear form associated with the left hand side of~\eqref{eq:dg_scheme3D}.  The classical DG space is given by
\[
 {V}_h^k = \{v\in L^2(\mathcal{D}): v|_{S_{ij\ell}} \in \mathcal{ {X}}_k(S_{ij\ell}),\, 1\le i\le N_x,\, 1\le j\le N_y, \, 1\le \ell \le N_t\} .
\] 
We seek some higher-order polynomial $\psi$ to augment to the space $V_h^k$ with.  The resulting augmented space $\widetilde{V}_h^k$ will contain a test function $v\in \widetilde{V}_h^k$ that simultaneously satisfies $ \widetilde{v}  \ge 0$ and
\begin{equation} 
 \mathcal{L}(\widetilde{u},\widetilde{v}) = \int_{S_{ij\ell}}\widetilde{u}\,dx\,dy\,dt,  
 \label{eq:LHS_avg3D}
\end{equation}
 for any $\widetilde{u}\in \widetilde{V}_h^k$.  One can obtain an appropriate $\psi \in \mathcal{X}_r$ from Problem~\eqref{eq:optimization_algorithm} applied to $(1+2)$ dimensions: For $r>k$, find a test function $\psi := \widetilde{v}\in \widetilde{\mathcal{X}}_k(S_{ij\ell})$ (not necessarily unique) such that
\begin{subequations}
\begin{alignat}{2}
&\!\min_{ \vec{d}}        &\qquad& \|\vec{d}\|_2^2 \\
&\text{subject to} &      & \widetilde{v}(x,y,t) \geq 0,~ (x,y,t)\in S_{ij\ell},\\
&                  &      &  \mathcal{L}(\widetilde{u},\widetilde{v})=\int_{S_{ij\ell}}\widetilde{u}\,dx\,dy, \quad \forall \widetilde{u} \in \widetilde{\mathcal{X}}_k(S_{ij\ell}).
\end{alignat}
\label{eq:optimization_algorithm3D}
\end{subequations}
Problem~\eqref{eq:optimization_algorithm3D} only needs to be applied on troubled cells (where the cell average from the unmodulated DG scheme using $\mathcal{ {X}}_k(S_{ij\ell})$ is negative).  In practice, Problem~\eqref{eq:optimization_algorithm3D} can be weakened to hold at quadrature points on $S_{ij\ell}$ (and not necessarily over the entire cell).


\remark Problems~\eqref{eq:optimization_algorithm} and ~\eqref{eq:optimization_algorithm3D} searches for coefficients $\vec{d}$ such that the solution $\widetilde{v}\ge 0$ and satisfies $\mathcal{L}(\widetilde{u},\widetilde{v})=\overline{u}$ for all $\widetilde{u}\in \widetilde{\mathcal{X}}_k$. In general, determining these coefficients is not trivial - especially if nonnegativity is requested to hold for \textit{all} admissible parameters $\alpha,\beta,\omega,\gamma,\Delta x, \Delta y,\Delta z$. That is, $\vec{d}$ are independent of the parameter space $\Omega_\delta = (\alpha,\beta,\omega,\gamma,\Delta x, \Delta y,\Delta z)$.

	For a fixed set of parameters $\Omega_\delta$, Problems~\eqref{eq:optimization_algorithm} and ~\eqref{eq:optimization_algorithm3D} is able to generate feasible candidates for $\vec{d}$. However, determining $\vec{d}$ that are feasible for any parameters in $\Omega_\delta$ is more challenging. Effectively, this amounts to multi-parametric (MP) programming problems~\cite{pistikopoulos2020multi} of the type
	\begin{align*}
	J^*(\vec{\theta}) &= \min_{\vec{x}\in \mathbb{R}^n} f(\vec{x},\vec{\theta})
	\\
	&\text{subject to}~~g(\vec{x},\vec{\theta}),~~~\vec{\theta} \in \Omega_\delta, 
	\end{align*}
	where the optimal value $J$ in general \textit{depends} on the parameters $\vec{\theta}$. Obtaining $\vec{d}$ that are constant for all $\vec{\theta}\in \Omega_\delta$ can also be interpreted as a semi-infinite program (SIP) with finitely many variables and an infinite number of constraints~\cite{reemtsen2013semi}. Nonlinear variants of these optimization problems remain a highly active area of research. It is worth pointing out that these SIP and MP optimization techniques assume that $\Omega_\delta$ is a compact set, which automatically rules out non-compact sets such as $\{(\alpha,\beta\,\gamma): \alpha>0,\beta>0, \gamma\ge 0,\}$.
	
	From another perspective, the primary focus is to satisfy the constraints (nonnegativity), as such, the parameter search could also be interpreted as a feasibility problem (no objective function, only constraints).

\section{Positivity-preserving limiter}  \label{section_limiter} 
In this section we summarize the positivity-preserving scheme.  The augmented spaces $\widetilde{\mathcal{X}}_k$ only ensure that the cell average from the unmodulated DG scheme is non-negative.  To arrive at a solution that is non-negative everywhere, a limiter may be needed.  We first state the following important conclusion, then discuss the limiter.
\begin{theorem} 
Consider the model PDE~\eqref{eq:model} or~\eqref{eq:model3D}.  Then, the cell average $\overline{u}|_S$ of the DG schemes in~\eqref{eq:dg_scheme} and~\eqref{eq:dg_scheme3D} with the augmented spaces $\widetilde{ \mathcal{X}}_k(S)$ will remain non-negative provided that the source term and inflow conditions from the upstream cells (including the inflow from the boundary conditions) are non-negative.
\end{theorem}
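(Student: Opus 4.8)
The plan is to bypass any direct analysis of the augmented linear system and instead use the distinguished test function that the augmented space was built to contain. By construction---this is precisely the feasibility ansatz behind Problem~\eqref{eq:optimization_algorithm} (respectively Problem~\eqref{eq:optimization_algorithm3D})---the local augmented space $\widetilde{\mathcal{X}}_k(S)$ contains a test function $\widetilde{v}$ that is nonnegative on $S$ and satisfies $\mathcal{L}(\widetilde{u},\widetilde{v}) = \int_S \widetilde{u}\,dx\,dy$ for every $\widetilde{u}\in\widetilde{\mathcal{X}}_k(S)$, which is exactly \eqref{eq:LHS_avg} (resp.\ \eqref{eq:LHS_avg3D}). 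I would take this $\widetilde{v}$ as the engine of the proof. Because the DG scheme is posed cell by cell, one may use the extension of $\widetilde{v}$ by zero outside $S$ as a legitimate global test function, so the per-cell forms $\mathcal{L}$ and $\mathcal{R}$ apply verbatim.

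First I would substitute $v=\widetilde{v}$ into the augmented weak form $\mathcal{L}(u,v)=\mathcal{R}(v)$, giving $\mathcal{L}(u,\widetilde{v})=\mathcal{R}(\widetilde{v})$. Applying \eqref{eq:LHS_avg} to the DG solution $u$ itself---legitimate since $u|_S\in\widetilde{\mathcal{X}}_k(S)$---collapses the left-hand side to $\int_S u\,dx\,dy = |S|\,\overline{u}|_S$. Hence $|S|\,\overline{u}|_S = \mathcal{R}(\widetilde{v})$, and the theorem reduces to showing $\mathcal{R}(\widetilde{v})\ge 0$. This is a termwise sign inspection: the volume term $\int_S f\,\widetilde{v}$ is nonnegative since $f\ge 0$ and $\widetilde{v}\ge 0$; each inflow-face integral is $\alpha$ (resp.\ $\beta$) times an integral of $u(x^-_{i-1/2},y)\,\widetilde{v}(x^+_{i-1/2},y)$ (resp.\ the $y$-trace), where $\alpha,\beta>0$, the inflow trace of $u$ is nonnegative by hypothesis, and $\widetilde{v}\ge 0$ on the cell and hence on its inflow faces. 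Every summand is therefore nonnegative, so $\overline{u}|_S = |S|^{-1}\mathcal{R}(\widetilde{v})\ge 0$.

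The $(1+2)$-dimensional case is handled identically. Substituting $v=\widetilde{v}$ into \eqref{eq:dg_scheme3D} and invoking \eqref{eq:LHS_avg3D} reduces $|S_{ij\ell}|\,\overline{u}|_{S_{ij\ell}}$ to the right-hand side of \eqref{eq:dg_scheme3D}, whose source integral and three inflow-face integrals---now including the temporal inflow face at $t^-_{\ell-1/2}$ carrying data from the preceding time slab---are each nonnegative under the stated hypotheses, using $\alpha,\beta>0$ and $\widetilde{v}\ge 0$.

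The hard part is not the algebra but the standing hypothesis it rests on: the existence of a nonnegative $\widetilde{v}$ satisfying \eqref{eq:LHS_avg}. As noted after Problem~\eqref{eq:optimization_algorithm}, verifying feasibility a priori (for instance via the Karush--Kuhn--Tucker conditions) is difficult, and in the parameter-robust regime it becomes a semi-infinite / multi-parametric program over $\Omega_\delta$. I would therefore state the result as conditional on feasibility of Problem~\eqref{eq:optimization_algorithm} (verified numerically on each troubled cell), rather than attempt an unconditional existence proof. A secondary, milder point concerns propagation across the mesh: since $\alpha,\beta>0$ (and the temporal direction in the space-time case) the cells admit an upwind ordering, so the inflow trace feeding a downstream cell is the outflow trace of an already-processed cell. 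The present result is local to $S$ and does not by itself guarantee nonnegative outflow \emph{traces}, which is precisely why the scaling limiter of~\cite{zhang2010maximum} is applied to enforce pointwise nonnegativity before a cell's solution is passed downstream.
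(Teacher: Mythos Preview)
Your proof is correct and follows essentially the same route as the paper: substitute the distinguished nonnegative test function $\widetilde{v}$ into the augmented weak form, invoke \eqref{eq:LHS_avg} (resp.\ \eqref{eq:LHS_avg3D}) to collapse the left-hand side to $\int_S u$, and then observe that $\mathcal{R}(\widetilde{v})\ge 0$ term by term under the stated sign hypotheses. Your additional remarks on the conditional dependence on feasibility of Problem~\eqref{eq:optimization_algorithm} and on the role of the scaling limiter for downstream trace nonnegativity are sound elaborations, but the core argument matches the paper's exactly.
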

\begin{proof} 
 Let $S\in \mathcal{T}_h$ and set the polynomail degree $k\ge 1$. Assume that the inflow conditions and source function $f$ are non-negative. Further, suppose that Problem~\eqref{eq:optimization_algorithm} has generated an augmented basis function $\psi \in \mathcal{X}_r(S)$, for $r>k$ (see Appendix A for special cases). As such, there exists $ \widetilde{v}\in \widetilde{X}_k(S) = \mathcal{X}_k(S) \cup \text{span}\{\psi\}$ that simultaneously satisfies $0\le \widetilde{v}$ and
\begin{align}
\mathcal{L}(\widetilde{u},\widetilde{v}) = \int_S \widetilde{u} \,dx \,dy  ,
\label{eq_proof1}
\end{align} 
for any $\widetilde{u}\in \widetilde{X}_k(S)$ (where $\mathcal{L}(\cdot,\cdot)$ is defined in~\eqref{eq:dg_scheme}). However, from \eqref{eq_origin}, we also have
\begin{align}
\mathcal{R}(\widetilde{v}) = \mathcal{L}(\widetilde{u},\widetilde{v}),
\label{eq_proof2}
\end{align} 
where $\mathcal{R}(\widetilde{v}) \ge0 $, since we assumed $f $ and the inflow conditions are both non-negative. Using \eqref{eq_proof2} in \eqref{eq_proof1} gives
\[
0\le \mathcal{R}(\widetilde{v}) = \mathcal{L}(\widetilde{u},\widetilde{v})= \int_S \widetilde{u} \,dx \,dy.
\] 
It follows that $\overline{u}_S = \frac{1}{|S|} \int_S \widetilde{u} \,dx \,dy\ge 0. $
\hfill 
$\blacksquare$
\end{proof}

The limiter we use is applied at quadrature points.  In practice, equation~\eqref{eq:LHS_avg} is approximated by quadrature.  This suggests we may weaken the condition which enforces that the test function $v$ from~\eqref{eq:optimization_algorithm} is non-negative on an entire cell $S_{ij}$.  Instead, one may request that $v$ from~\eqref{eq:optimization_algorithm} is non-negative only at specific quadrature points (and not necessarily everywhere on $S_{ij}$).  If the corresponding quadrature weights are positive, it then follows that the resulting quadrature will be non-negative (similar conclusions hold for the DG scheme in $(1+2)$ dimensions).

 Let the set $G_{2D}$ contain the abscissa on cell $S_{ij}$ from a tensor product quadrature rule with non-negative weights (such as the Gauss-Lobatto quadrature~\cite{abramowitz1965handbook,quarteroni2001numerical}).  Similarly, in 3D, the set $G_{3D}$ contains the abscissa on cell $S_{ij\ell}$ from a tensor product quadrature rule with non-negative weights.  In 2D, the scaling limiter from~\cite{zhang2010maximum} modifies the approximation $u_{ij}(x,y)$ such that 
\[
\widetilde{u}_{ij}(x,y) = \theta (u_{ij}(x,y) - \overline{u}_{ij} ) +  \overline{u}_{ij},
\quad
\theta = \min_{ (x,y) \in G_{2D} } \bigg\{ 1, \bigg| \frac{\overline{u}_{ij}}{u_{ij}(x,y) - \overline{u}_{ij}}  \bigg| \bigg\}.
\]
 The 3D analog of the scaling limiter is
  \[
\widetilde{u}_{ij \ell}(\vec{x}) = \theta (u_{ij \ell}(\vec{x}) - \overline{u}_{ij \ell} ) +  \overline{u}_{ij \ell},
\quad
\theta = \min_{ \vec{x} \in G_{3D} } \bigg\{ 1, \bigg| \frac{\overline{u}_{ij \ell}}{u_{ij \ell}(\vec{x}) - \overline{u}_{ij \ell}}  \bigg| \bigg\},
\]   
where $\vec{x}=[x,y,z]^T$.
The limited polynomial solution $\widetilde{u} $ is non-negative and also retains high-order accuracy~\cite{zhang2010maximum}.  
   
\section{Motivating examples for problem~\eqref{eq:optimization_algorithm}}   
\label{section_motivations}
The purpose of this section is to showcase some features of Problem~\eqref{eq:optimization_algorithm}.  In particular, the effectiveness of Problem~\eqref{eq:optimization_algorithm} is demonstrated in the case of variable coefficients and polynomial degrees $k>1$.  Several motivating examples are also provided which illustrate the need for augmentation.  This is done by examining the traditional unaugmented high-order DG schemes (e.g., with $\mathcal{P}_k$, $\mathcal{S}_k$, or $\mathcal{Q}_k$); to see whether or not they give rise to a test function $v$ that satisfies both $v\ge0$ and~\eqref{eq:LHS_avg}.  The condition $v\ge0$ is enforced because we need the right hand side of~\eqref{eq:dg_scheme} to be non-negative to guarantee a non-negative cell average.  In more detail, we enforce that~\eqref{eq:LHS_avg} holds, however, \eqref{eq:dg_scheme} gives
 \begin{align*}
\int_{S_{ij}} u \,dx\,dy &=    \mathcal{L}(u,v)
 \\
 &= \int_{S_{ij}} f v  \,dx \,dy
+ 
\alpha \int_{y_{\scriptscriptstyle{j- \frac{1}{2}}}}^{y_{\scriptscriptstyle{j+ \frac{1}{2}}}}
u(x^-_{ \scriptscriptstyle{i- \frac{1}{2}}},y) v(x^+_{ \scriptscriptstyle{i- \frac{1}{2}}},y)
\,dy
\\
&+ 
\beta  \int_{x_{\scriptscriptstyle{j- \frac{1}{2}}}}^{x_{\scriptscriptstyle{j+ \frac{1}{2}}}}
 u(x,y^-_{i-\scriptscriptstyle\frac{1}{2}})  v(x,y^+_{\scriptscriptstyle{i-\frac{1}{2}}})    \,dx .
 \end{align*}
 Since we assume that $f\ge0$ and the inflow from $u$ is also non-negative, it follows that $v\ge0$ ensures the cell average of $u$ will be non-negative.
\subsection{The need for augmented spaces in (1+1) dimensions}
  We consider a single cell $S_{ij}=[-1,1]^2$, and $\alpha=1$, $\beta=1/10$, $\gamma=0$, $\Delta x_i=\Delta y_j = 0.1$. By solving~\eqref{eq:LHS_avg} (replacing $ {\widetilde{V}}_k $ with $ { {V}}_k $) with $r=k=2$, one finds that the resulting unique test function $v$ has a negative minimum (see Fig.~\ref{fig:Xk_motive}).
\begin{figure}[htb!]
\centering
\begin{subfigure}[h]{0.4\linewidth}
\includegraphics[width=\linewidth]{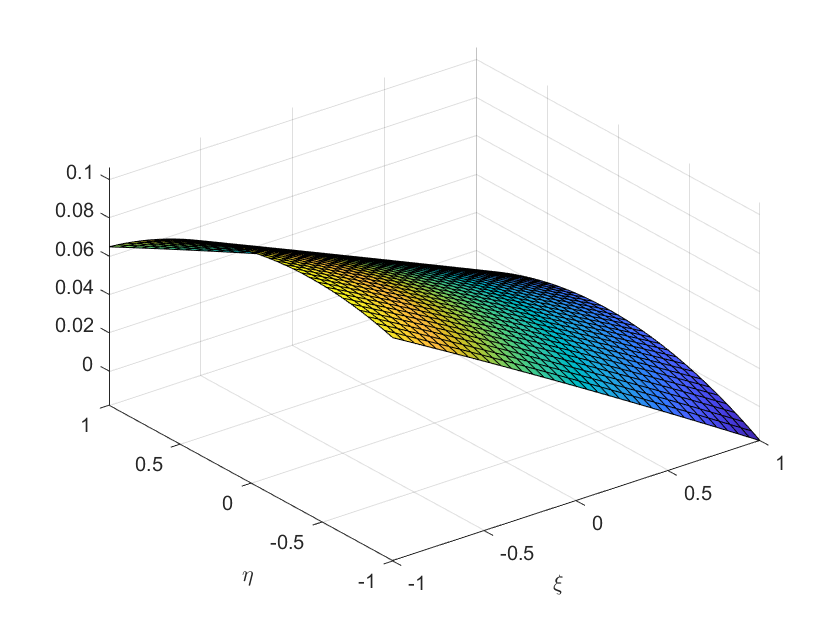}
\caption{{\small $v$ region plot for $\mathcal{P}_2$}}
\label{fig:P2_motive1}
\end{subfigure}
\hspace*{5ex}
\begin{subfigure}[h]{0.4\linewidth}
\includegraphics[width=\linewidth]{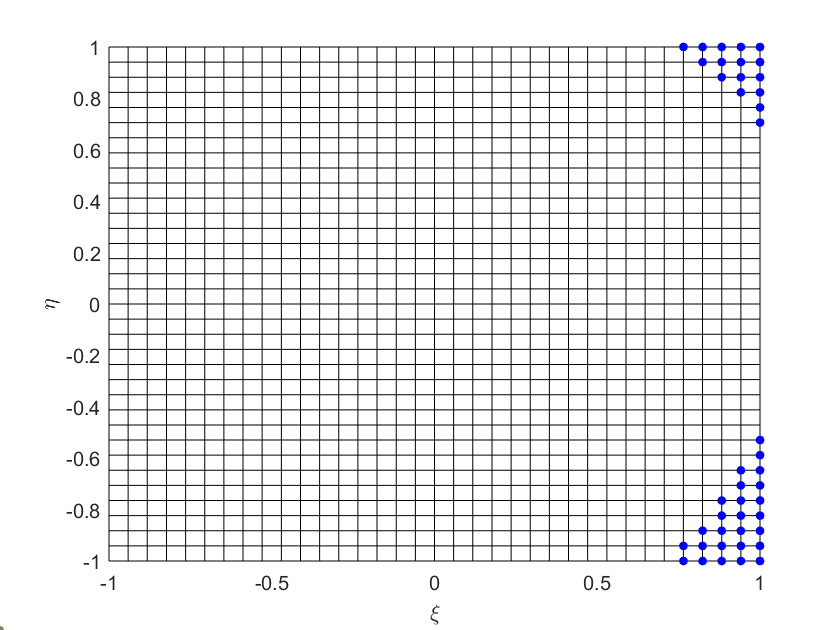}
\caption{{\small Dots are negative values ($\mathcal{P}_2$)}}
\label{fig:P2_motive2}
\end{subfigure}%
\\
\begin{subfigure}[h]{0.4\linewidth}
\includegraphics[width=\linewidth]{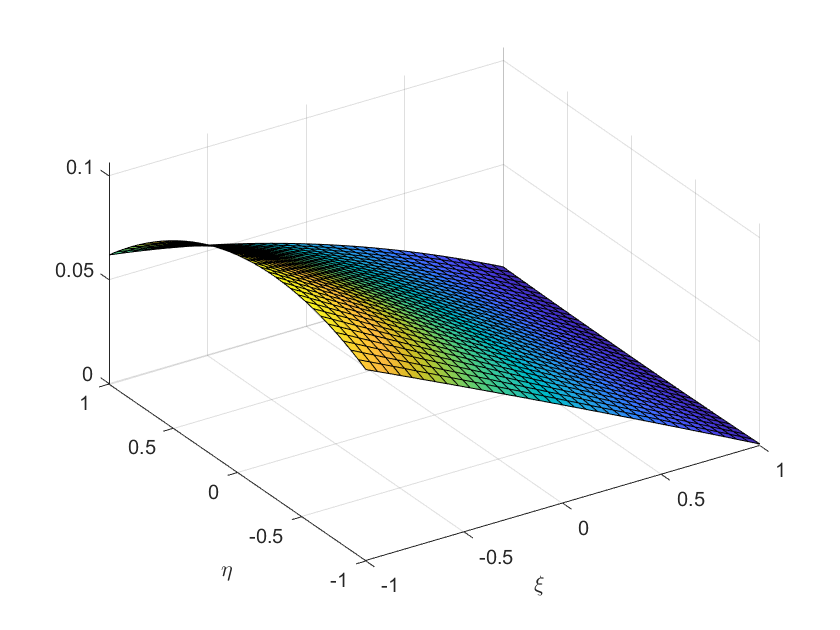}
\caption{{\small $v$ region plot for $\mathcal{Q}_2$}}
\label{fig:Q2_motive1}
\end{subfigure}
\hspace*{5ex}
\begin{subfigure}[h]{0.4\linewidth}
\includegraphics[width=\linewidth]{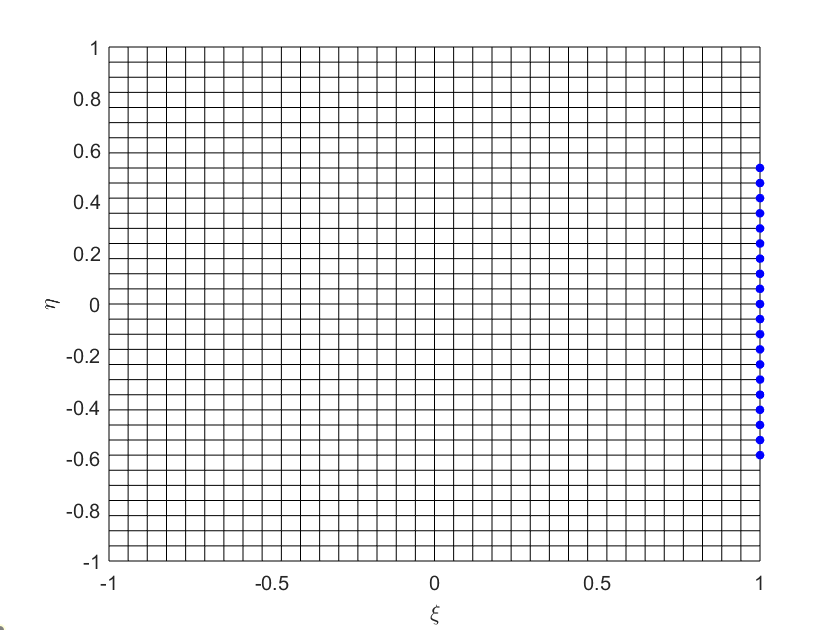}
\caption{{\small Dots are negative values ($\mathcal{Q}_2$)}}
\label{fig:Q2_motive2}
\end{subfigure}%
\caption{Plots of $v$ for $\mathcal{P}_2$ (Figure~\ref{fig:P2_motive1}) and $\mathcal{Q}_2$  (Figure~\ref{fig:Q2_motive1}). Figure~\ref{fig:P2_motive2} and \ref{fig:Q2_motive2} display locations where $v$ is negative (indicated with dots).}
\label{fig:Xk_motive}
\end{figure} 

One might hope that simply increasing the polynomial degree would satisfy the constraints, however, this does not appear to remedy the situation.  The minimum of $v$ arising from the standard spaces $\mathcal{P}_k$ and $\mathcal{Q}_k$ for various $k>1$ is displayed in Fig~\ref{fig:high_order_Pk_Qk}.  We again restrict our study to a single cell, with the same parameters $\alpha=1$, $\beta=1/10$, $\gamma=0$, $\Delta x_i=\Delta y_j = 0.1$.  From Fig~\ref{fig:high_order_Pk_Qk} it is clear that increasing the polynomial degree will not necessarily generate a non-negative test function $v$, therefore, this will not ensure a positive cell average.
 \begin{figure}[htb!]
\centering
\includegraphics[trim = 30mm 80mm 40mm 85mm, clip, scale=0.4]{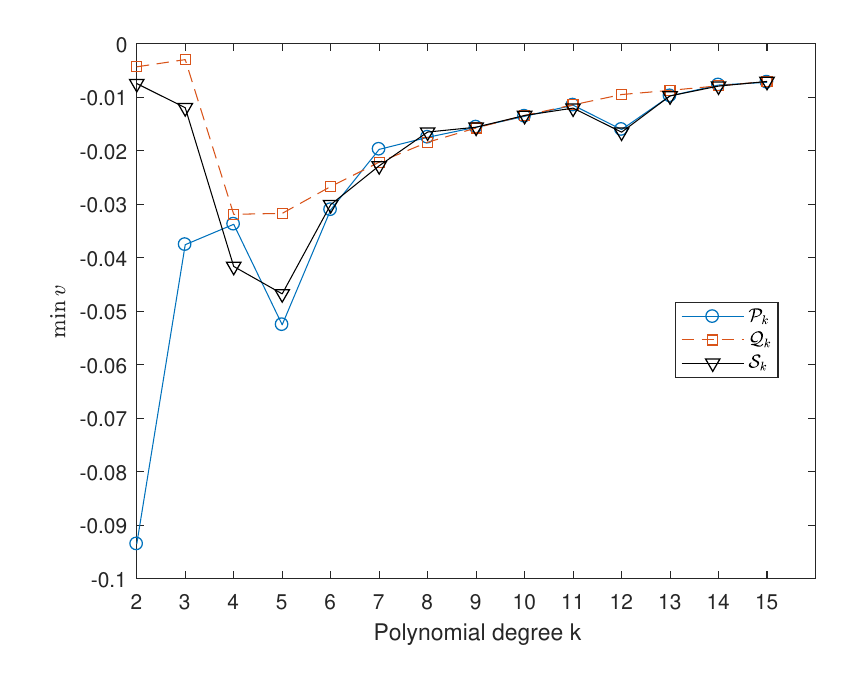}
\caption{Minimums for the basis function $v$ resulting from the unaugmented spaces $\mathcal{P}_k$, $\mathcal{Q}_k$, and $\mathcal{S}_k$ (Serendipity space) are plotted against their respective polynomial degree $k$.  The minimums are strictly negative}
\label{fig:high_order_Pk_Qk}
\end{figure} 
\subsection{Variable coefficients}
	Here we highlight a convenient feature of Problem~\eqref{eq:optimization_algorithm} - it can be used with non-negative variable coefficients.  In the case of variable coefficients, the augmented functions for $k=1$ suggested in~\cite{ling2018conservative} do not provide a non-negative test function $v$.  In particular, they augment $\mathcal{P}_1$ with the function 
\begin{align}
\psi(x,y) = xy + (1/4)(x^2+y^2)
\label{eq:basis_P1}
\end{align} 
However, this choice is only suitable for constant coefficients. 
	
	 We provide a counterexample.  Let $\alpha(x,y)= e^{-x-y}$, $\beta(x,y) = ( 1 + e^2 ) - e^{-x-y}$, and $\gamma(x,y) \equiv 0$.  Set the domain to be the reference cell $[-1,1]^2$, $\Delta x =\Delta y =0.1$, and $x=\xi$, $y=\eta$.  Note that $\alpha,\beta>0$ on the reference cell, and the vector field $[\alpha(x,y),\beta(x,y)]^T$ is divergence-free. 
	 
	 Fig.~\ref{fig:counter_ex0} plots the test function $v$ resulting from using the augmented spaces $\widetilde{\mathcal{P}}_1$ or $\widetilde{\mathcal{Q}}_1$.  
\begin{figure}[htb!]
\centering
\begin{subfigure}[h]{0.4\linewidth}
\includegraphics[width=\linewidth]{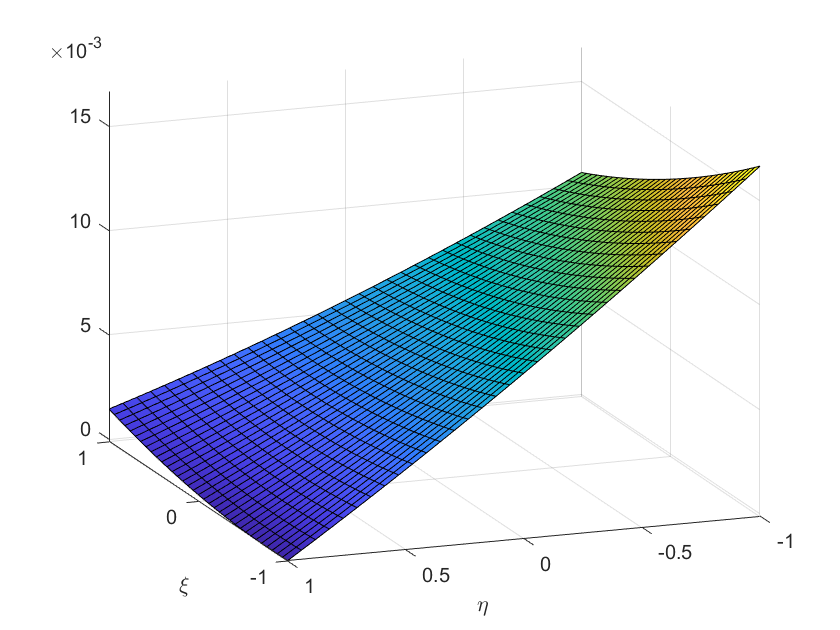}
\caption{Region plot of $v$ ($\widetilde{\mathcal{P}}_1$)}
\end{subfigure}
\hspace*{5ex}
\begin{subfigure}[h]{0.4\linewidth}
\includegraphics[width=\linewidth]{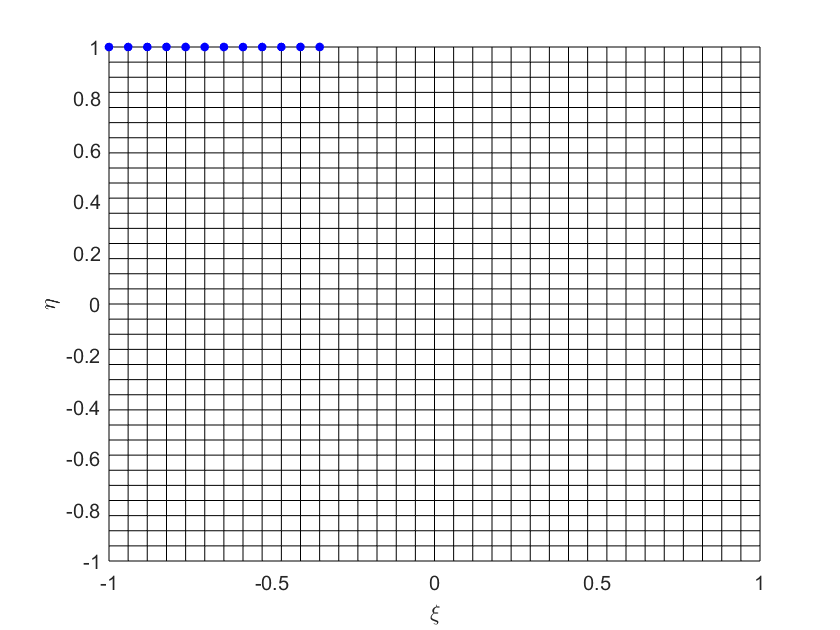}
\caption{Points where $v$ is negative ($\widetilde{\mathcal{P}}_1$)}
\end{subfigure}%
\\
\begin{subfigure}[h]{0.4\linewidth}
\includegraphics[width=\linewidth]{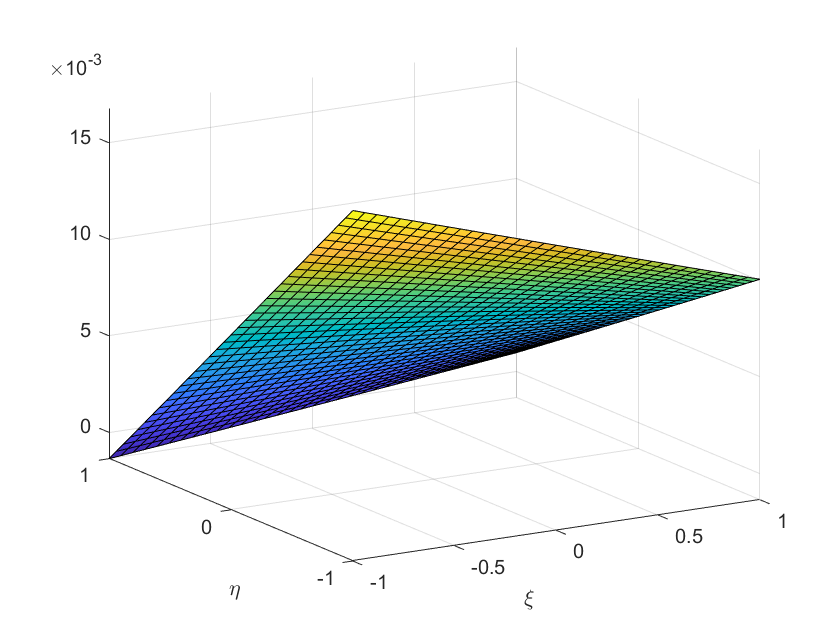}
\caption{Region plot of $v$ ($\widetilde{\mathcal{Q}}_1$)}
\end{subfigure}
\hspace*{5ex}
\begin{subfigure}[h]{0.4\linewidth}
\includegraphics[width=\linewidth]{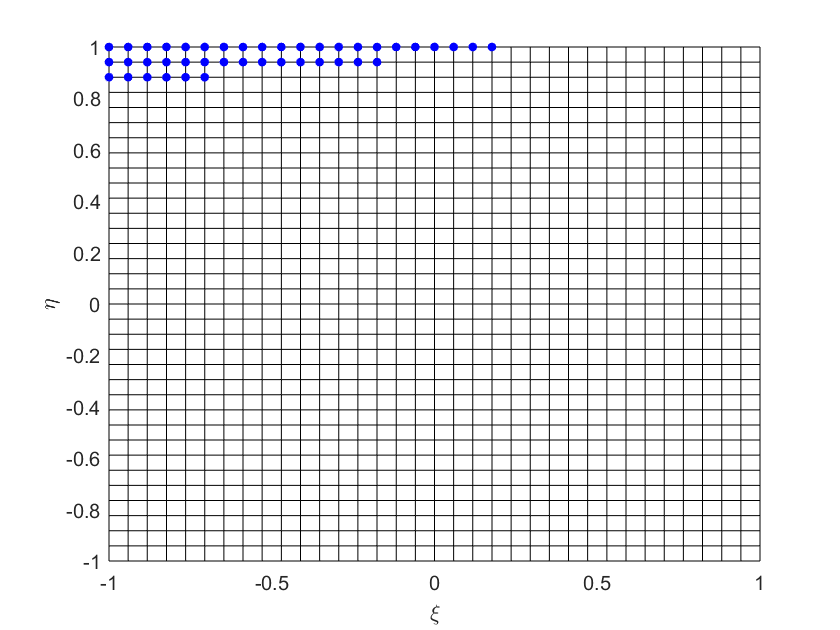}
\caption{Points where $v$ is negative ($\widetilde{\mathcal{Q}}_1$)}
\end{subfigure}%
\caption{Basis function $v$ resulting from the augmented spaces ($\psi$ from~\eqref{eq:basis_P1}, see~\cite{ling2018conservative}) $\widetilde{\mathcal{P}}_1$ or $\widetilde{\mathcal{Q}}_1$ are not non-negative.  Filled circles indicate locations where $v$ is negative}
\label{fig:counter_ex0}
\end{figure} 
		 We see that they do not remain non-negative.  Thus, the augmented basis function~\eqref{eq:basis_P1} does not ensure a positive cell average.  On the other hand, from Table~\ref{table_example_section_ex_variable}, it is evident that Problem~\eqref{eq:optimization_algorithm} is able to ensure that $v$ is non-negative by automatically finding a suitable augmented basis function.
		 { \small
\begin{table}[htb!]
  \centering 
  \begin{tabular}{cl}
    \toprule
 Augmented space              & $\min_{(\xi,\eta)\in [-1,1]^2} v(\xi,\eta)$  \\
    \midrule   
  $\widetilde{\mathcal{P}}_1$~($\psi$ from~\eqref{eq:basis_P1}, see~\cite{ling2018conservative})                 & $-1.268\mathrm{E-04}$    \\
  $\widetilde{\mathcal{Q}}_1$~($\psi$ from~\eqref{eq:basis_P1}, see~\cite{ling2018conservative})                 & $-1.363\mathrm{E-03}$    \\  
  $\widetilde{\mathcal{P}}_1$~(Problem~\eqref{eq:optimization_algorithm}) & $ 1.843\mathrm{E-04}$    \\ 
  $\widetilde{\mathcal{Q}}_1$~(Problem~\eqref{eq:optimization_algorithm}) & $ 1.155\mathrm{E-04}$    \\   
    \bottomrule
  \end{tabular}
  \caption{Minimums of the test function $v$ (acquired from different augmented spaces).  Problem~\eqref{eq:optimization_algorithm} automatically ensures $v$ is non-negative.
The augmented spaces from~\cite{ling2018conservative} result in a negative test function $v$}
  \label{table_example_section_ex_variable}
\end{table}	 
}

\subsection{Example augmented spaces} \label{sec:example_aug}
Although the primary feature of Problem~\eqref{eq:optimization_algorithm} is to implicitly generate suitable test and trial spaces on the fly, it is possible to use it to pre-process them (as apposed to activating Problem~\eqref{eq:optimization_algorithm} on troubled cells). In particular, in this section we explore explicit formulas for augmented basis functions which are valid for multiple cells (e.g, obeying CFL-type conditions).

We remark that the explicit augmented functions presented here are not unique or optimal, but are selected for their succinct representation. Problem~\eqref{eq:optimization_algorithm} generates $\dim \mathcal{X}_r$ real coefficients (housed in $\vec{d}$) that allow us to form the augmented function candidate $\psi = \sum_{i=1}^{ \dim{\mathcal{X}_r} } \vec{d}_i \Phi_i$. As $r>k$, not only will there be a proliferation of these coefficients, but their decimal expansions are unwieldy in general.

Let $\alpha,\beta,\gamma$ be constants. Proving that the augmented basis function $\psi$ gives rise to a viable augmented space is symbolically intensive, because $v$ is obtained a-posteriori via solving a linear system on the reference cell. State-of-the-art symbolic software such as Mathematica, Maxima, and Maple are unable to verify non-negativity analytically in all cases of admissible parameters. In this section we present numerical evidence that demonstrates $v(x,y) \ge 0$ and satisfies~\eqref{eq:LHS_avg} (see  Section \ref{section:Q2_proof}, Appendix A, for analytic evidence).

	Computations are focused on the reference cell, denoted by $\hat{S}=[-1,1]^2$. To reduce the parameter space, we scale the equation in~\eqref{eq:LHS_avg}. Mapping~\eqref{eq:LHS_avg} to the reference cell, and scaling the resulting equation by $(4/(\Delta x \Delta y))(\Delta x/(2\alpha))$ gives
\begin{subequations}
\begin{align}
\frac{\Delta x  }{2 \alpha} 
 \iint_{\hat{S}} u \, d\xi d\eta
&=
-  \iint_{\hat{S}} u \bigg(   v_{\xi} + \frac{\Delta x \beta}{\Delta y \alpha}   v_{\eta} \bigg)   \, d\xi d\eta  
+
\frac{\Delta x \beta}{\Delta y \alpha} \int_{\eta=1} \  u(\xi,1) v(\xi,1) \,d\xi
\\
&+
 \int_{\xi=1} \  u(1,\eta) v(1,\eta) \,d\eta
+
 \frac{\Delta x}{2\alpha} \iint_{\hat{S}} \gamma u v   \, d\xi d\eta  
\end{align}
\end{subequations}
Define
\begin{align} 
B = \frac{\Delta x \beta}{\Delta y \alpha},~~G =  \frac{\Delta x \gamma}{2\alpha}.
\label{eq_CFL_definition}
\end{align}
The scalar $B$ enables us to relate our computations to CFL-type conditions~\cite{courant1967partial,leveque1992numerical}.  Without a loss of generality, we assume that $\alpha= 1$, since it can be interpreted as scaling the model problem~\eqref{eq:model} by $1/\alpha$ and then applying the DG scheme to the scaled equation. 

Therefore, with the above modifications and assumption on $\alpha$, the parameter space depends on the triplet $(\Delta x,B,G)$ (as opposed to 5 parameters $(\Delta x,\Delta y, \beta,\alpha,\gamma)$).  In the following we assume $ (\xi,\eta) \in \hat{S}=[-1,1]^2$.

\subsection*{Augmenting $\mathcal{Q}_k$ in (1+1) dimensions} 
\subsubsection*{Case for $k=2$, $B< 0.5$}
Using Problem~\eqref{eq:optimization_algorithm}, the following basis function is obtained
\[
\psi_{1,\mathcal{Q}_2}(\xi,\eta) = 
\bigg(
 \xi \eta^2(1-\xi)^2(\xi + 1)(1-\eta^2) /8
\bigg)^2.
\]
The associated augmented space $\widetilde{\mathcal{Q}}_2 = \mathcal{Q}_2 \cup \text{span}(\psi_{1,\mathcal{Q}_2})$ is numerically found to ensure $v\ge0$ for any $B\le 0.5$, $G \ge 0$.
 
\subsubsection*{Case for $k=2$, $B> 2$}
The augmented basis function can be designed so that positivity is ensured for arbitrarily large CFL conditions. Using Problem~\eqref{eq:optimization_algorithm}, the following basis function is obtained
\[
\psi_{2,\mathcal{Q}_2}(\xi,\eta) = 
\bigg(
  \eta \xi^2 (1-\eta)^2 (\eta + 1)   (1-\xi^2)/8
\bigg)^2.
\]
The associated augmented space $\widetilde{\mathcal{Q}}_2 = \mathcal{Q}_2 \cup \text{span}(\psi_{2,\mathcal{Q}_2})$ is numerically found to ensure $v\ge0$ for any $B\ge 2$, $G \ge 0$.
 
Table~\eqref{table_example_section_ex_Q2} exhibits explicit augmented basis functions for $\mathcal{Q}_k$, with various $k$. These augmented basis functions are found to be computationally valid for all cells satisfying a specific CFL type condition. The augmented functions in this table are not unique, but are selected for their somewhat straightforward expressions. 
\begin{table}[htb!] 
  \centering  
  \resizebox{\linewidth}{!}{%
  \begin{tabular}{clc}
    \toprule
 Condition to enforce $\overline{u}_{ij}\ge0$ & Augmented basis function $\psi$                        & Original space \\
    \midrule  
  $(\beta/\alpha)\Delta x/\Delta y < \frac{1}{2}$ & $\psi_{1,\mathcal{Q}_2} =  (xy^2(1-x)^2(x + 1)(1-y^2)/8)^2 $ & $\mathcal{Q}_2$   \\
  $(\beta/\alpha)\Delta x/\Delta y > 2$           & $\psi_{2,\mathcal{Q}_2} = (y x^2 (1-y)^2 (y + 1)   (1-x^2)/8)^2$ & $\mathcal{Q}_2$\\  
  $(\beta/\alpha)\Delta x/\Delta y < \frac{1}{4}$ & $\psi_{1,\mathcal{Q}_3}  $ (see \eqref{eq:Appendix_Q3_psi_1})
  & $\mathcal{Q}_3$   \\ 
  $(\beta/\alpha)\Delta x/\Delta y > 4$           & $\psi_{2,\mathcal{Q}_3}  $ (see \eqref{eq:Appendix_Q3_psi_2}) & $\mathcal{Q}_3$\\   
  
  $(\beta/\alpha)\Delta x/\Delta y < \frac{1}{8}$ & $\psi_{1,\mathcal{Q}_4}  $ (see \eqref{eq:Appendix_Q4_psi_1}) & $\mathcal{Q}_4$   \\ 
  $(\beta/\alpha)\Delta x/\Delta y > 8$           & $\psi_{2,\mathcal{Q}_4}  $ (see \eqref{eq:Appendix_Q4_psi_2}) & $\mathcal{Q}_4$\\     
    \bottomrule
  \end{tabular}}
  \caption{Example augmented basis functions for $\mathcal{Q}_k$ generated by Problem~\ref{eq:optimization_algorithm}. Appendix B has explicit formulas for select higher-order augmented functions ($\mathcal{Q}_k$)} 
  \label{table_example_section_ex_Q2}  
\end{table}
 
The robustness of these augmented basis functions with respect to the CFL condition is examined.  Consider a single cell $S_{ij}$.  Random values are generated for $\Delta x_i,\Delta y_j,\alpha,\beta,\gamma$ to use in Problem~\eqref{eq:optimization_algorithm}.  Define $B:=\mathrm{CFL}= (\beta/\alpha)(\Delta x_i/\Delta y_j)$.  For several polynomial degrees, Fig.~\ref{fig:cfl_ex} plots the minimum of the test function $v$ obtained from Table~\eqref{table_example_section_ex_Q2} and Problem~\eqref{eq:optimization_algorithm} against a wide range of CFL conditions.
\begin{figure}[htb!]
\centering
\includegraphics[trim = 35mm 80mm 40mm 85mm, clip, scale = 0.5]{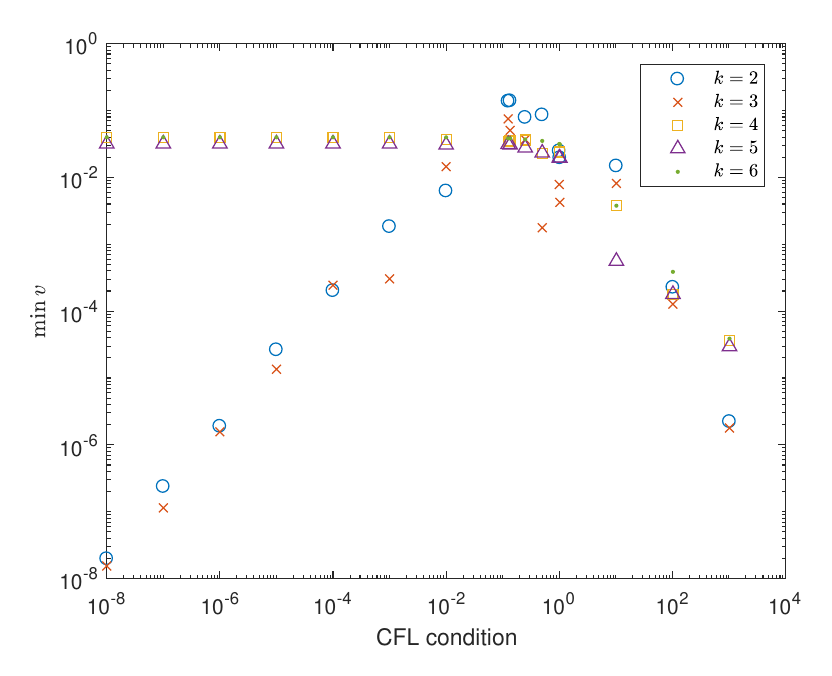}
\caption{Minimum of $v$ vs CFL number ($\widetilde{\mathcal{Q}}_k$). Log-log plot of the minimum of $v$ generated from Problem~\eqref{eq:optimization_algorithm} vs CFL number.  The test functions $v$ are non-negative for a wide range of CFL numbers}
\label{fig:cfl_ex}
\end{figure}
Our approach augments a single basis function to $\mathcal{Q}_2$.  As such, the space ${\widetilde{ \mathcal{ Q} }}_2$ will have ten degrees of freedom, which is slightly more expensive than the unaugmented space.  To arrive at a third order accurate scheme that has cost comparable to $\mathcal{Q}_2$ (and has a positive cell average), one could consider augmenting the eight-node serendipity element instead. This way one would have nine degrees of freedom (same as $\mathcal{Q}_2$), and the accuracy will be of third order~\cite{arnold2002approximation}.

\subsection*{Augmenting $\mathcal{S}_k$ in (1+1) dimensions} 
 To reduce the computational cost, but retain $k+1$ order accuracy, one could consider augmenting the serendipity spaces $\mathcal{S}_k$. Table~\eqref{table_example_section_ex_S2} displays several explicit augmented basis functions which are valid under certain CFL conditions. The spaces $\widetilde{\mathcal{S}}_k = \mathcal{S}_k \cup \text{span}\{\psi_{\cdot ,\mathcal{S}_k}\}$ enforce the existence of a non-negative test function under these CFL conditions.
\begin{table}[htb!]
  \centering 
  \resizebox{\linewidth}{!}{%
  \begin{tabular}{clc}
    \toprule
 Condition to enforce $\overline{u}_{ij}\ge0$ & Augmented basis function $\psi$                        & Original space \\
    \midrule  
  $(\beta/\alpha)\Delta x/\Delta y < \frac{1}{2}$ & $\psi_{1,\mathcal{S}_2} = 
  (
\xi  \eta^2  (\xi - 1)^2  (\eta - 1)    (\xi + 1)  (\eta + 1)/8
  )^2$ 
  & $\mathcal{S}_2$   \\
  $(\beta/\alpha)\Delta x/\Delta y > 2$           & 
  $\psi_{2,\mathcal{S}_2} 
  = (
\xi^2  \eta  (\xi - 1)  (\eta - 1)^2 
  (\xi + 1)  (\eta + 1)/2
  )^2$
   & $\mathcal{S}_2$\\     
  $(\beta/\alpha)\Delta x/\Delta y <\frac{1}{8}$ & $\psi_{1,\mathcal{S}_3} $ (see \eqref{eq:Appendix_S3_psi_1}) & $\mathcal{S}_3$   \\ 
  $(\beta/\alpha)\Delta x/\Delta y > 8$           & $\psi_{2,\mathcal{S}_3}  $ (see \eqref{eq:Appendix_S3_psi_2}) & $\mathcal{S}_3$\\   
  
  $(\beta/\alpha)\Delta x/\Delta y < \frac{1}{16}$ & $\psi_{1,\mathcal{S}_4} $ (see \eqref{eq:Appendix_S4_psi_1}) & $\mathcal{S}_4$   \\ 
  $(\beta/\alpha)\Delta x/\Delta y > 16$           & $\psi_{2,\mathcal{S}_4}  $ (see \eqref{eq:Appendix_S4_psi_2}) & $\mathcal{S}_4$\\    
    \bottomrule
  \end{tabular}
  }
  \caption{Example augmented basis functions for the serendipity spaces $\mathcal{S}_k$ generated by Problem~\ref{eq:optimization_algorithm}. Appendix C has explicit formulas for select higher-order augmented functions ($\mathcal{S}_k$)
  } 
  \label{table_example_section_ex_S2}
\end{table}
Fig.~\eqref{fig:cfl_ex3} demonstrates the effectiveness of the augmented basis functions. When augmented to the standard serendipity spaces, the minimum of $v$ remains non-negative for CFL conditions arbitrarily small or large.
\begin{figure}[htb!]
\centering
\includegraphics[trim = 35mm 80mm 40mm 85mm, clip, scale = 0.5]{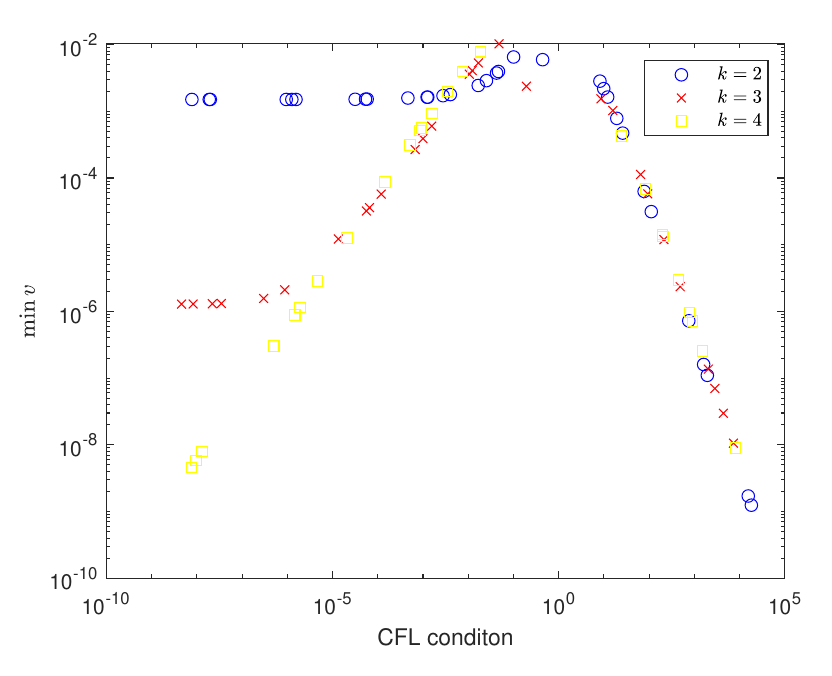}
\caption{Minimum of $v$ vs CFL number ($\widetilde{\mathcal{S}}_k$). Log-log plot of the minimum of $v$ generated from Problem~\eqref{eq:optimization_algorithm} vs CFL number.  The test functions $v$ are non-negative for a wide range of CFL numbers} 
\label{fig:cfl_ex3}
\end{figure}
We emphasize that the augmented basis functions in Tables \ref{table_example_section_ex_Q2} and \ref{table_example_section_ex_S2} are determined computationally, and theoretical evidence for validity for all parameters $\alpha,\beta,\gamma$ is lacking. However, Appendix A provides theoretical evidence for special cases, and numerical experiments give good results.

\subsection*{Augmenting $\mathcal{P}_k$ in (1+1) dimensions} 
Problem~\ref{eq:optimization_algorithm} can successfully find augmented basis functions for $\mathcal{P}_k$ as well. Numerically we investigate the robustness of Problem~\ref{eq:optimization_algorithm} by generating random values for $\Delta x_i,\Delta y_j,\alpha,\beta,\gamma$. For each polynomial degree, we inspect the minimum of $v$ resulting from the associated augmented spaces. 
\begin{figure}[htb!] 
\centering
\includegraphics[trim = 35mm 80mm 40mm 85mm, clip, scale = 0.5]{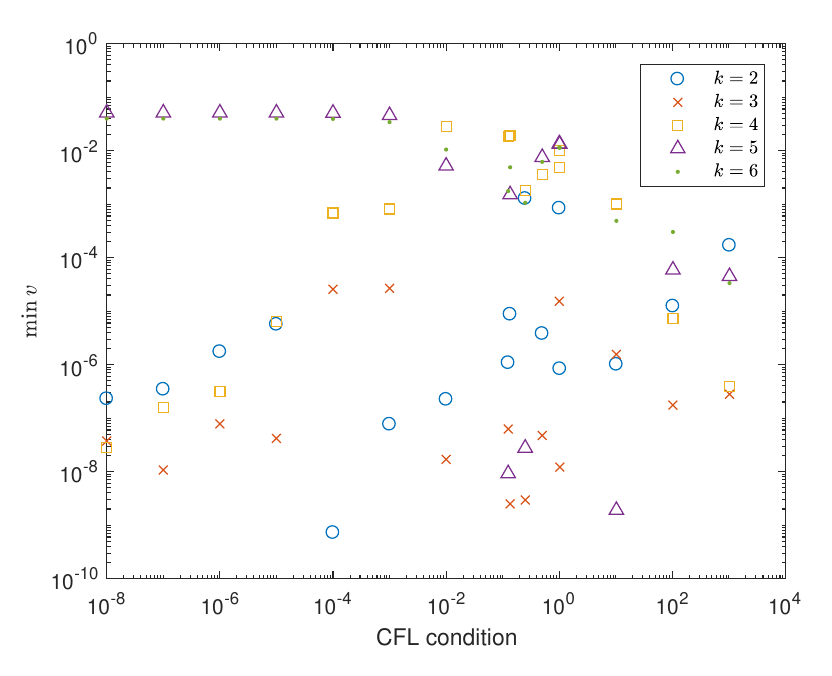}
\caption{Minimum of $v$ vs CFL number ($\widetilde{\mathcal{P}}_k$).  Log-log plot of the minimum of $v$ generated from Problem~\eqref{eq:optimization_algorithm} vs CFL number.  The test functions $v$ are non-negative for a wide range of CFL numbers}
\label{fig:cfl_ex2}
\end{figure}
 Fig.~\ref{fig:cfl_ex2} plots the minimum of the test function $v$ against CFL numbers. It is evident that Problem~\ref{eq:optimization_algorithm} is creating non-negative test functions. 
 
\section{Numerical experiments} \label{section:numerics}
This section conducts several numerical experiments which verify and validate the proposed high-order accurate positivity-preserving DG scheme.  Computational results are conducted in $(1+1)$ dimensions and $(1+2)$ dimensions.  The numerical experiments provided here assume that it is understood from context that $x$ is replaced with $t$ and $y=x$ in~\eqref{eq:model} and~\eqref{eq:dg_scheme}.  Similar test cases can be found in~\cite{ling2018conservative}.  However, we remark that in~\cite{ling2018conservative} the backward Euler method is used to resolve the time derivative of~\eqref{eq:model}, which reduces the overall accuracy to first order.  Our method provided in this paper is $k+1$ order accurate in space and time.

\subsection{Positivity violation for the \texorpdfstring{$\mathcal{S}_2$}{S2} DG scheme in (1+1) dimensions} \label{section:numerics_S2_example}
In this section we examine an example which demonstrates that the $\mathcal{  S}_2$ DG scheme will not preserve a positive cell average in general; even when given a non-negative source and inflow boundary conditions.  The computational domain is a single cell for this experiment.  In particular, the following space-time domain is selected: $(x,t)\in  [0,\Delta x]\times [0,\Delta t].$     We take $\alpha=2,\beta=1$, $\gamma=0$, and 
\begin{align*}
 u(x,t) &=  \begin{cases} 
        20t + x, & \text{for }  \frac{487}{496} \le \xi \le 1,~ -\frac{17}{18}\le \eta \le \frac{60\xi-67}{24\xi-15}   \\
        0, & \text{otherwise} 
        \end{cases} 
\\
 f(x,t) &=  \begin{cases} 
        41, & \text{for } \frac{487}{496} \le \xi \le 1,~ -\frac{17}{18}\le \eta \le \frac{60\xi-67}{24\xi-15} \\
        0, & \text{otherwise} 
        \end{cases} 
.
\end{align*}
Note that the forcing function $f$ and $u$ are defined by the mapping from the reference square $(\xi,\eta)\in [-1,1]^2.$  For this example $(\beta/\alpha)\Delta t / \Delta x = 1/2$, and as such, we select the augmented basis function given in the second row of Table~\ref{table_example_section_ex_S2}.
\begin{table}[htb!]
  \centering 
  \begin{tabular}{lccc}
    \toprule
$\Delta t = \Delta x$  & Exact  & $\mathcal{S}_2$ & $\widetilde{\mathcal{ S}}_2$\\
    \midrule  
$1/8$  &$9.4\mathrm{E-04}$ &$-2.5\mathrm{E-05}$ &$1.9\mathrm{E-05}$\\
$1/16$ &$4.7\mathrm{E-04}$ &$-1.2\mathrm{E-05}$ &$9.8\mathrm{E-06}$\\
$1/32$ &$2.3\mathrm{E-04}$ &$-6.4\mathrm{E-06}$ &$4.9\mathrm{E-06}$\\
$1/64$ &$1.1\mathrm{E-04}$ &$-3.2\mathrm{E-06}$ &$2.4\mathrm{E-06}$\\
    \bottomrule
  \end{tabular}
  \caption{Cell averages $\overline{u}_{1,1}$ generated by the $\mathcal{S}_2$ DG and $\widetilde{\mathcal{ S}}_2$ DG schemes}
  \label{table_example_section_pos_vio_S2}
\end{table}
From Table~\ref{table_example_section_pos_vio_S2}, we see that the unaugmented $\mathcal{S}_2$ DG scheme violates positivity of the cell average, even when the mesh spacing is reduced.  On the other hand, the augmented $\widetilde{\mathcal{ S}}_2$ DG scheme preserves a positive cell average.

\subsection{Positivity violation for the \texorpdfstring{$\mathcal{Q}_2$}{Q2} DG scheme in (1+1) dimensions}
The numerical experiment in this section is analogous to the one described in Section~\ref{section:numerics_S2_example}.  We demonstrate that the $\mathcal{Q}_2$ DG scheme will not preserve a positive cell average in general; even when given a non-negative source and inflow boundary conditions.  
The computational domain is again taken to be a single cell for this experiment.  In particular, the following space-time domain is selected: $(x,t)\in  [0,\Delta x]\times [0,\Delta t].$  We take $\alpha=\beta=1$, $\gamma=0$, and 
\begin{align*}
 u(x,t) &=  \begin{cases} 
        200x + t, & \text{for }  \frac{487}{496} \le \xi \le 1,~ -\frac{17}{18}\le \eta \le \frac{60\xi-67}{24\xi-15} \\
        0, & \text{otherwise} 
        \end{cases} 
\\
 f(x,t) &=  \begin{cases} 
        201, & \text{for }  \frac{487}{496} \le \xi \le 1,~ -\frac{17}{18}\le \eta \le \frac{60\xi-67}{24\xi-15} \\
        0, & \text{otherwise} 
        \end{cases} 
.
\end{align*}
Note the forcing function $f$ and $u$ are defined by the mapping from the reference square $(\xi,\eta)\in [-1,1]^2.$  We select $\Delta t= \Delta x$.  It is straightforward to verify that $(\beta/\alpha)\Delta t / \Delta x = 1$, so $\widetilde{\mathcal{ Q}}_2$ results from augmenting $\mathcal{Q}_2$ with the associated $\psi$ from Table~\ref{table_example_section_ex_Q2}.  Table~\ref{table_example_section_pos_vio_Q2} records the cell average $\overline{u}_{1,1}$ for both the $\mathcal{Q}_2$ DG and $\widetilde{\mathcal{ Q}}_2$ DG schemes.
\begin{table}[htb!]
  \centering 
  \begin{tabular}{lccc}
    \toprule
$\Delta t = \Delta x$  & Exact  & $\mathcal{Q}_2$ & $\widetilde{\mathcal{ Q}}_2$\\
    \midrule   
$1/8$   &$9.5\mathrm{E-03}$    &$-3.5\mathrm{E-04}$ &$2.6\mathrm{E-05}$\\
$1/16$  &$4.7\mathrm{E-03}$    &$-1.7\mathrm{E-04}$ &$1.3\mathrm{E-05}$\\
$1/32$  &$2.3\mathrm{E-03}$    &$-8.8\mathrm{E-05}$ &$6.5\mathrm{E-06}$\\
$1/64$  &$1.1\mathrm{E-03}$    &$-4.4\mathrm{E-05}$ &$3.2\mathrm{E-06}$\\
$1/128$ &$5.9\mathrm{E-04}$    &$-2.2\mathrm{E-05}$ &$1.6\mathrm{E-06}$\\
    \bottomrule
  \end{tabular}
  \caption{Cell averages $\overline{u}_{1,1}$ generated by the $\mathcal{Q}_2$ DG and $\widetilde{\mathcal{ Q}}_2$ DG schemes}
  \label{table_example_section_pos_vio_Q2}
\end{table}
It is evident that the $\mathcal{Q}_2$ DG scheme generates a negative cell average.  However, the $\widetilde{\mathcal{ Q}}_2$ DG scheme retains a positive cell average.

\subsection{Positivity violation for the \texorpdfstring{$\mathcal{P}_2$}{P2} DG scheme in (1+1) dimensions}  
Higher-order $\mathcal{P}_k$ spaces also do not guarantee an unmodulated positive cell average.  To see this, we present a counter example for $k=2$.  Take $\alpha = \beta=1$, $\gamma=0$ in~\eqref{eq:model}, and put the exact solution as 
\[
u(x,t) = \bigg( \frac{31}{2}  x t \bigg)^{13},
\quad
(x,t) \in [0,0.5]^2 
.
\]  
The computational domain $[0,0.5]^2$ is partitioned uniformly, $N_x$ rectangles in the $x$ direction, and $N_t$ rectangles in the $t$ direction. 
\begin{table}[htb!]
  \centering 
  \begin{tabular}{lccc}
    \toprule
$N_x = N_t$  & Exact  & $\mathcal{P}_2$ & $\widetilde{\mathcal{ P}}_2$\\
    \midrule   
$2$   &$3.37\mathrm{E-03}$    &$-1.09\mathrm{E-04}$ &$1.47\mathrm{E-03}$\\
$4$   &$5.03\mathrm{E-11}$    &$-1.62\mathrm{E-12}$ &$2.19\mathrm{E-11}$\\
$8$   &$7.49\mathrm{E-19}$    &$-2.42\mathrm{E-20}$ &$3.27\mathrm{E-19}$\\
    \bottomrule
  \end{tabular}
  \caption{Cell averages $\overline{u}_{1,1}$ generated by the $\mathcal{P}_2$ DG and $\widetilde{\mathcal{ P}}_2$ DG schemes}
  \label{table_example_section_pos_vio_P2}
\end{table}  
Using Problem~\ref{eq:optimization_algorithm} $(r=k+2)$, we obtain $\psi(x,y) = (1/2)(xt)^2$, which is used to augment the space $\mathcal{P}_2$.  We note that this specific $\psi$ is only valid for the CFL condition
 \[
 \frac{(\beta/\alpha)  \Delta t}{\Delta x  } = 1.
 \]
We track the cell average in the bottom left cell, $S_{11}=[0,\Delta x]\times[0,\Delta t]$.  Table~\ref{table_example_section_pos_vio_P2} displays the cell average in the cell $S_{11}$ for the $\mathcal{P}_2$ DG and $\widetilde{\mathcal{ P}}_2$ DG schemes.  The exact cell average is also provided.  It is clear that the $\mathcal{P}_2$ DG is unable to preserve a positive cell average.  When the $\widetilde{\mathcal{ P}}_2$ DG scheme is used, the cell average remains non-negative.
  
\subsection{Accuracy test for the DG schemes in (1+1) dimensions} \label{section:numerics_accuracy_example}
A manufactured solution is used to verify numerically that the correct convergence rates are obtained.  We take $\alpha=\beta=\gamma=1$, and $u(x,t) = \cos^4{(x-t)}+10^{-14}$, with $\mathcal{D}=[0,0.1]\times[0,2\pi]$.  The source $f(x,t) $ can be determined from~\eqref{eq:model} using the exact solution $u$.  We see that $u\ge0$ and $f\ge0$.  Therefore, the inflow boundary conditions $u(x,0)$ and $u(0,t)$ are non-negative.  The time domain $[0,0.1]$ is divided into $N_t$ subintervals, and the space domain $[0,2\pi]$ is divided into $N_x$ subintervals.

Several spaces $\mathcal{X}_k$ are verified with the DG scheme, including their augmented counterparts $\mathcal{\widetilde{X}}_k$.  For reproducibility purposes, we use the augmented basis functions described in Appendix B and Appendix C.  Errors and orders of accuracy are given in Table~\ref{table_example_section_manufactured}.  The schemes give space-time orders of accuracy of $k+1$ in the $L^2$ norm. 

\begin{table}[htb!]
  \centering 
  \begin{tabular}{lcccccc}
    \toprule
     & &   \multicolumn{3}{c}{$\mathcal{X}_k$}  & \multicolumn{2}{c}{$ \mathcal{\widetilde{X}}_k$}\\
      \cmidrule(lr){3-5}\cmidrule(lr){6-7}
                              &$N_x=N_t$  & $L^2$ error  & $L^2$ order  &  $\min u_h$ & $L^2$ error & $L^2$ order \\
    \midrule   
$\mathcal{X}_k=\mathcal{Q}_2$ &$10$  &$2.01\mathrm{E-03}$    &$-$    &$2.74\mathrm{E-04}$ &$2.02\mathrm{E-03}$ &$-$    \\
				              &$20$  &$2.97\mathrm{E-04}$    &$2.76$ &$-2.93\mathrm{E-04}$ &$2.99\mathrm{E-04}$ & 2.75 \\
							  &$40$  &$4.29\mathrm{E-05}$    &$2.79$ &$-1.76\mathrm{E-05}$ &$4.32\mathrm{E-05}$ & 2.79 \\
				              &$80$  &$5.31\mathrm{E-06}$    &$3.01$ &$-2.98\mathrm{E-06}$ &$5.35\mathrm{E-06}$ & 3.01 \\
				              &$160$ &$6.63\mathrm{E-07}$    &$3.00$ &$-1.87\mathrm{E-07}$ &$6.69\mathrm{E-07}$ & 2.99 \\
				              \\
$\mathcal{X}_k=\mathcal{S}_2$ &$10$  &$2.01\mathrm{E-03}$    &$-$    & $ 2.85\mathrm{E-04}$&$2.02\mathrm{E-03}$ &$-$   \\
				              &$20$  &$2.97\mathrm{E-04}$    &$2.76$ & $-2.93\mathrm{E-04}$&$2.97\mathrm{E-04}$ & 2.76 \\
							  &$40$  &$4.29\mathrm{E-05}$    &$2.79$ & $-1.76\mathrm{E-05}$&$4.28\mathrm{E-05}$ & 2.79 \\
				              &$80$  &$5.31\mathrm{E-06}$    &$3.01$ & $-2.98\mathrm{E-06}$&$5.29\mathrm{E-06}$ & 3.01 \\
				              &$160$ &$6.63\mathrm{E-07}$    &$3.00$ & $-1.87\mathrm{E-07}$&$6.61\mathrm{E-07}$ & 3.00 \\
				              \\
$\mathcal{X}_k=\mathcal{P}_2$ &$10$  &$2.02\mathrm{E-03}$    &$-$    & $ 4.61\mathrm{E-04}$&$2.00\mathrm{E-03}$ &$-$   \\
				              &$20$  &$2.98\mathrm{E-04}$    &$2.75$ & $-3.59\mathrm{E-04}$&$2.94\mathrm{E-04}$ & 2.76 \\
							  &$40$  &$4.29\mathrm{E-05}$    &$2.79$ & $-2.32\mathrm{E-05}$&$4.23\mathrm{E-05}$ & 2.79 \\
				              &$80$  &$5.31\mathrm{E-06}$    &$3.01$ & $-1.46\mathrm{E-06}$&$5.28\mathrm{E-06}$ & 3.01 \\
				              &$160$ &$6.64\mathrm{E-07}$    &$3.00$ & $-9.17\mathrm{E-08}$&$6.62\mathrm{E-07}$ & 2.99 \\
				              \\
$\mathcal{X}_k=\mathcal{Q}_3$ &$10$  &$1.68\mathrm{E-04}$    &$-$    & $ 2.63\mathrm{E-04}$&$1.67\mathrm{E-04}$ &$-$   \\
				              &$20$  &$1.18\mathrm{E-05}$    &$3.83$ & $-1.44\mathrm{E-05}$&$1.17\mathrm{E-05}$ & 3.83 \\
							  &$40$  &$8.32\mathrm{E-07}$    &$3.82$ & $-8.58\mathrm{E-07}$&$8.31\mathrm{E-07}$ & 3.82 \\
				              &$80$  &$5.18\mathrm{E-08}$    &$4.00$ & $-6.24\mathrm{E-08}$&$5.17\mathrm{E-08}$ & 4.00 \\
				              &$160$ &$3.26\mathrm{E-09}$    &$3.98$ & $-4.60\mathrm{E-09}$&$3.26\mathrm{E-09}$ & 3.98 \\
								\\
$\mathcal{X}_k=\mathcal{P}_4$ &$10$  &$1.12\mathrm{E-05}$    &$-$    & $ 8.84\mathrm{E-07}$&$1.11\mathrm{E-05}$ & $-$  \\
				              &$20$  &$3.94\mathrm{E-07}$    &$4.83$ & $-8.77\mathrm{E-08}$&$3.89\mathrm{E-07}$ & 4.83 \\
							  &$40$  &$1.27\mathrm{E-08}$    &$4.94$ & $-1.44\mathrm{E-09}$&$1.26\mathrm{E-08}$ & 4.94 \\
				              &$80$  &$4.01\mathrm{E-10}$    &$4.99$ & $-4.27\mathrm{E-11}$&$3.99\mathrm{E-10}$ & 4.98 \\
    \bottomrule
  \end{tabular}
  \caption{Manufactured solution example in 2D~(Section~\ref{section:numerics_accuracy_example}).  Errors and rates for the DG schemes with different spaces $\mathcal{X}_k$ and $ \mathcal{\widetilde{X}}_k$  }
  \label{table_example_section_manufactured}
\end{table}

\subsection{Step function propagation for the linear problem in (1+1) dimensions}
This numerical experiment propagates a step function.  The time interval is $[0,0.1]$ (divided into $N_t=40$ subintervals), and the space domain is $[0,2\pi]$ (divided into $N_x=40$ subintervals).  We take $\alpha=\beta=1$, $f\equiv 0$, and $\gamma=0$.  The initial condition is
\[
u(x,0)
=
 \begin{cases} 
        1, & \text{for } x\in[3,4] \\
        0, & \text{otherwise} .
        \end{cases} 
\]
The boundary condition is $u(0,t)=0$.  Fig.~\ref{fig:shock_ex1} plots a comparison between the exact solution and the numerical approximations for the various higher-order DG schemes (with and without the positivity-preserving limiter).  As expected, the DG schemes without the limiter produce approximations that have negative values.  Non-negative approximations result when the limiter is applied.  Fig.~\ref{fig:shock_ex1_zoom} has a zoom-in near the discontinuities to give the reader a better idea of the oscillations that occur due to the higher-order DG approximation.

\begin{figure}[htb!]
\centering
\begin{subfigure}[h]{0.49\linewidth}
\includegraphics[trim = 40mm 80mm 40mm 85mm, clip, width=\linewidth]{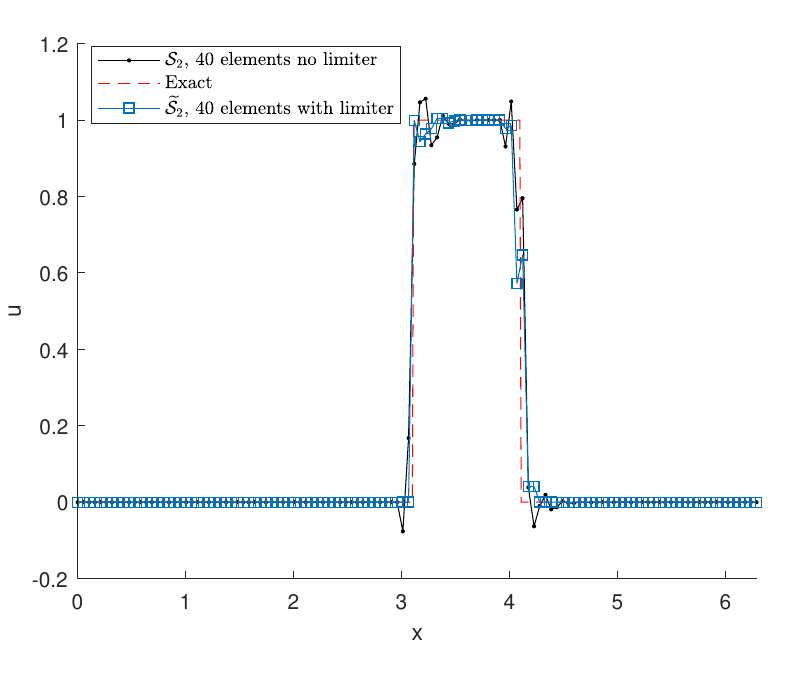}
\caption{$\mathcal{S}_2$ and $\mathcal{\widetilde{S}}_2$ DG schemes}
\end{subfigure}
\begin{subfigure}[h]{0.49\linewidth}
\includegraphics[trim = 40mm 80mm 40mm 85mm, clip, width=\linewidth]{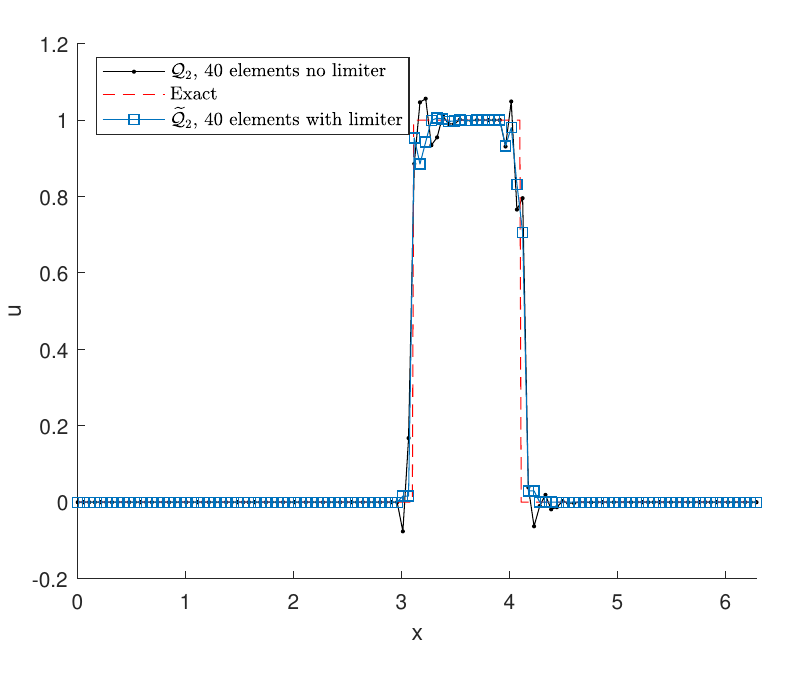}
\caption{$\mathcal{Q}_2$ and $\mathcal{\widetilde{Q}}_2$ DG schemes}
\end{subfigure}%
\\%
\begin{subfigure}[h]{0.49\linewidth}
\includegraphics[trim = 40mm 80mm 40mm 85mm, clip, width=\linewidth]{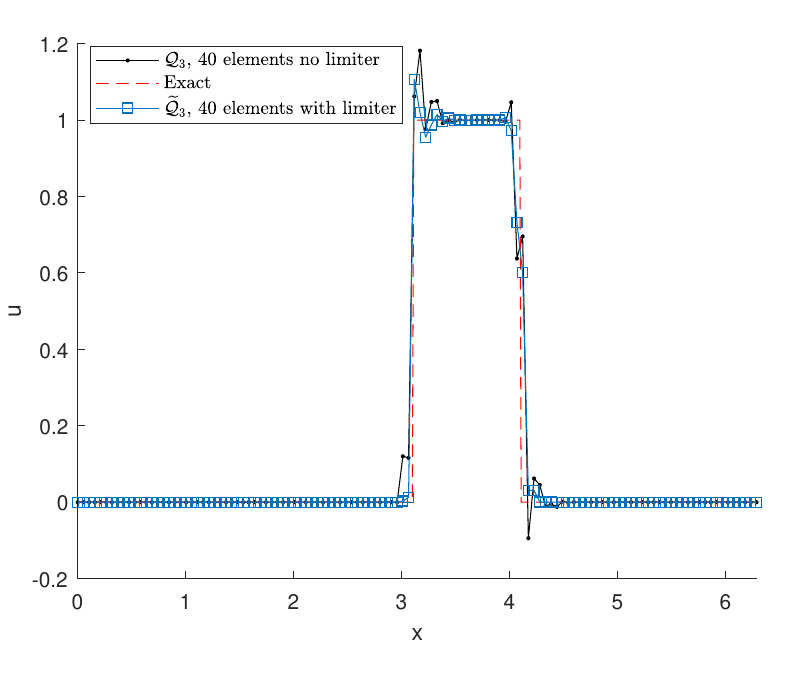}
\caption{$\mathcal{Q}_3$ and $\mathcal{\widetilde{Q}}_3$ DG schemes}
\end{subfigure}
\begin{subfigure}[h]{0.49\linewidth}
\includegraphics[trim = 40mm 80mm 40mm 85mm, clip, width=\linewidth]{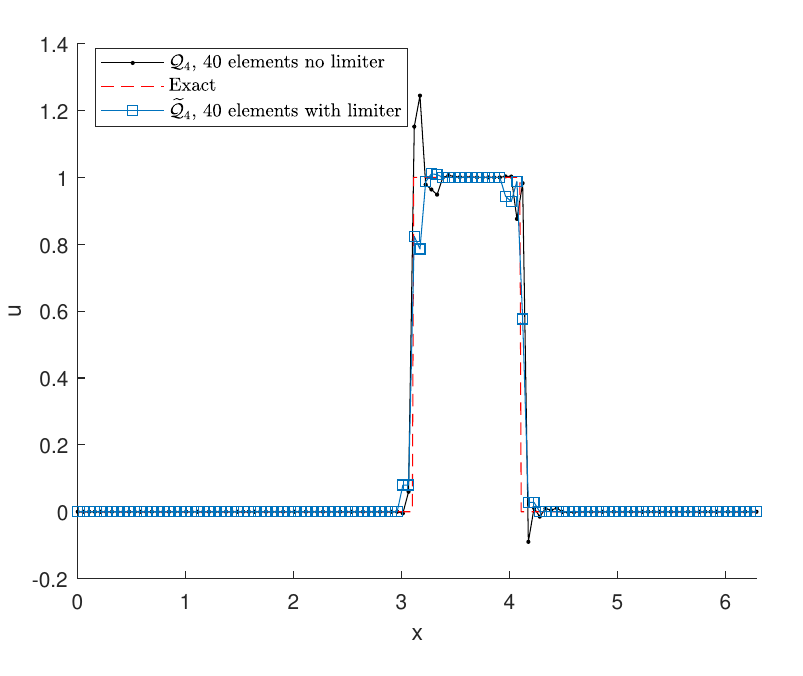}
\caption{$\mathcal{Q}_4$ and $\mathcal{\widetilde{Q}}_4$ DG schemes}
\end{subfigure}%
\caption{Comparison of the DG schemes for various polynomial orders, with and without the positivity-preserving limiter}
\label{fig:shock_ex1}
\end{figure}

\begin{figure}[htb!]
\centering
\begin{subfigure}[h]{0.49\linewidth}
\includegraphics[ width=\linewidth]{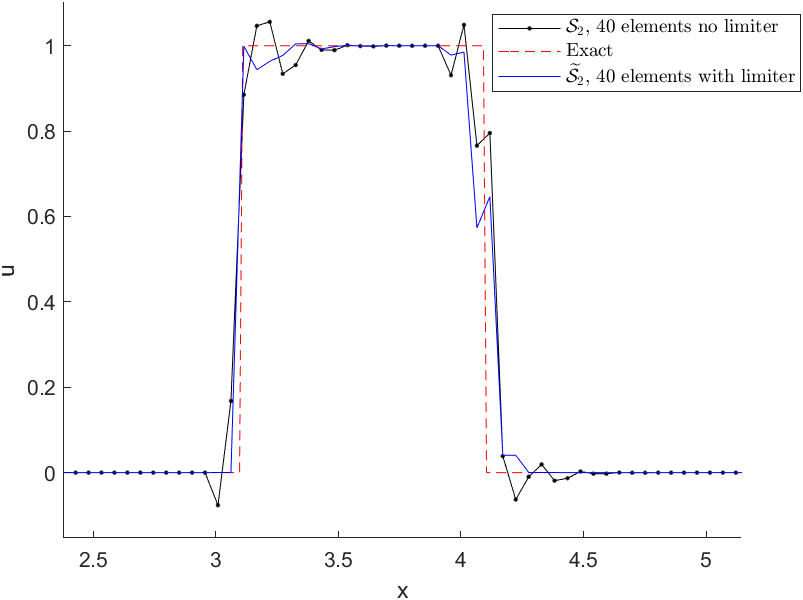}
\caption{$\mathcal{S}_2$ and $\mathcal{\widetilde{S}}_2$ DG schemes}
\end{subfigure}
\begin{subfigure}[h]{0.49\linewidth}
\includegraphics[  width=\linewidth]{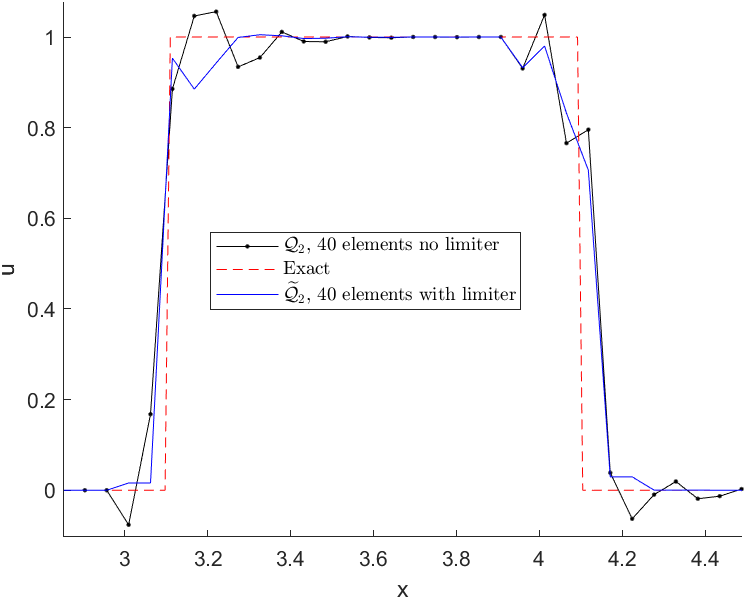}
\caption{$\mathcal{Q}_2$ and $\mathcal{\widetilde{Q}}_2$ DG schemes}
\end{subfigure}%
\\%
\begin{subfigure}[h]{0.49\linewidth}
\includegraphics[ width=\linewidth]{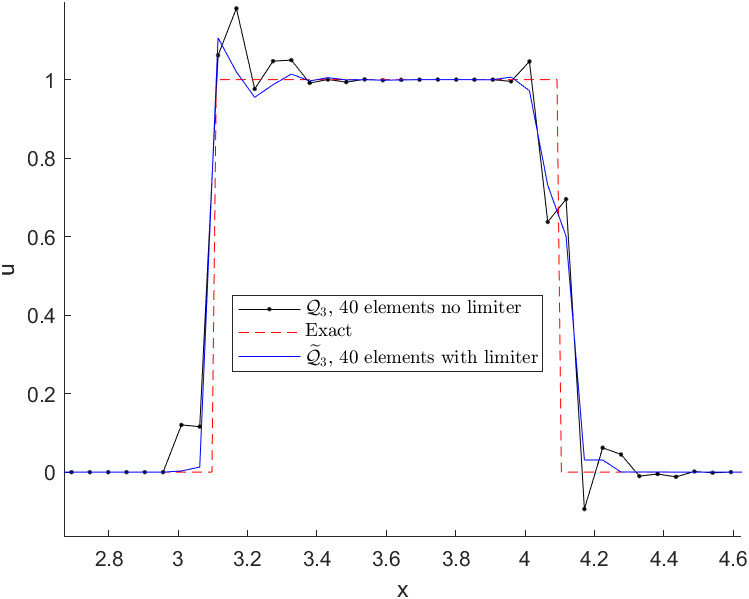}
\caption{$\mathcal{Q}_3$ and $\mathcal{\widetilde{Q}}_3$ DG schemes}
\end{subfigure}
\begin{subfigure}[h]{0.49\linewidth}
\includegraphics[ width=\linewidth]{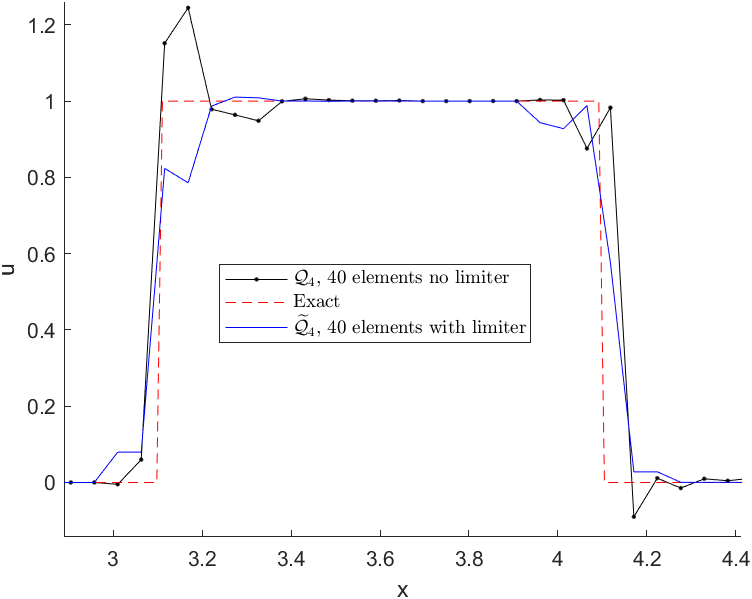}
\caption{$\mathcal{Q}_4$ and $\mathcal{\widetilde{Q}}_4$ DG schemes}
\end{subfigure}%
\caption{Zoom-in and comparison DG schemes for various polynomial orders, with and without the positivity-preserving limiter}
\label{fig:shock_ex1_zoom}
\end{figure}

\subsection{Accuracy test for the DG schemes in (1+2) dimensions} \label{section:numerics_accuracy_example3D}
A manufactured solution is used to verify numerically that the correct convergence rates are obtained.  We take $\alpha = \beta=8$ and $ \gamma=0.1$.  The computational domain is the cube $[0,0.5]^3.$  The exact solution is set as $u(x,y,t) = x y e^{t/8} e^{(x^2 + y^2)/8}.$ 

Table~\ref{table_example_section_manufactured3D} has the errors and convergence rates for the various 3D DG schemes.  The correct rate of $k+1$ in the $L^2$ norm is approximately observed for the unaugmented and augmented spaces.
\begin{table}[htb!]
  \centering 
  \begin{tabular}{lccccc}
    \toprule
     & &   \multicolumn{2}{c}{$\mathcal{X}_k$}  & \multicolumn{2}{c}{$ \mathcal{\widetilde{X}}_k$}\\
      \cmidrule(lr){3-4}\cmidrule(lr){5-6}
                              &$N_x=N_y=N_t$  & $L^2$ error  & $L^2$ order & $L^2$ error & $L^2$ order\\
    \midrule   
$\mathcal{X}_k=\mathcal{P}_1$ &$2$  &$1.266\mathrm{E-03}$    &$-$    &$1.283\mathrm{E-01}$ &$-$\\
				              &$4$  &$3.205\mathrm{E-04}$    &$1.98$ &$3.462\mathrm{E-02}$ & 1.88\\
							  &$8$  &$7.763\mathrm{E-05}$    &$2.04$ &$8.715\mathrm{E-03}$ & 1.99\\
				              &$16$ &$1.879\mathrm{E-05}$    &$2.04$ &$2.123\mathrm{E-03}$ & 2.03\\
				              \\
$\mathcal{X}_k=\mathcal{Q}_2$ &$2$  &$3.6674\mathrm{E-05}$    &$-$    &$3.654\mathrm{E-05}$ &$-$\\
				              &$4$  &$5.4880\mathrm{E-06}$    &$2.74$ &$5.472\mathrm{E-06}$ & 2.73\\
							  &$8$  &$5.9162\mathrm{E-07}$    &$3.21$ &$5.871\mathrm{E-07}$ & 3.22\\
				              &$16$ &$6.5933\mathrm{E-08}$    &$3.16$ &$6.532\mathrm{E-08}$ & 3.16\\ 
				              \\
$\mathcal{X}_k=\mathcal{Q}_3$ &$2$  &$4.285\mathrm{E-06}$    &$-$    &$4.285\mathrm{E-06}$ &$-$\\
				              &$4$  &$2.408\mathrm{E-07}$    &$4.14$ &$2.411\mathrm{E-07}$ & 4.15\\
							  &$8$  &$9.773\mathrm{E-09}$    &$4.62$ &$9.775\mathrm{E-09}$ & 4.62\\
				              &$16$ &$4.309\mathrm{E-10}$    &$4.50$ &$4.303\mathrm{E-10}$ & 4.50\\ 				              
    \bottomrule
  \end{tabular}
  \caption{Manufactured solution example in 3D~(Section~\ref{section:numerics_accuracy_example3D}).  Errors and rates for the 3D DG schemes with different spaces $\mathcal{X}_k$ and $ \mathcal{\widetilde{X}}_k$  }
  \label{table_example_section_manufactured3D}
\end{table}
 
\subsection{Step function initial condition for the linear problem in (1+2) dimensions}
Here we explore a natural extension of the step function propagation to two spatial dimensions. As before, we pick the time domain as $[0,1]$, which is partitioned into $N_t=16$ subintervals. The space domain is put as $[0,2\pi]^2$, which is partitioned into $N_x=N_y=32$ in each coordinate direction. For simplicity, we set $\alpha=\beta=1$, $f\equiv 0$, and $\gamma=0$. The following step function initial condition is utilized:
\[
u(x,y,0)
=
 \begin{cases} 
        1, & \text{for } (x,y)\in [2,3] \\
        0, & \text{otherwise} .
        \end{cases} 
\]
Homogeneous boundary conditions of the form $u(0,0,t)=0$ are assumed.  
 

The simple nature of unidirectional flow causes the initial profile to translate. It is evident that the approximation is not bound-preserving without augmentation and limiting (see Fig.~\ref{fig:shock_ex2_2}) 

On the other hand, when we utilize the augmentation and limiter, the corresponding $\mathcal{\widetilde{P}}_k$ DG schemes are positivity-preserving ($k=1,2,3,4$, Fig.~\ref{fig:shock_ex2}).

\begin{figure}[htb!]
\centering 
\includegraphics[scale = 0.4]{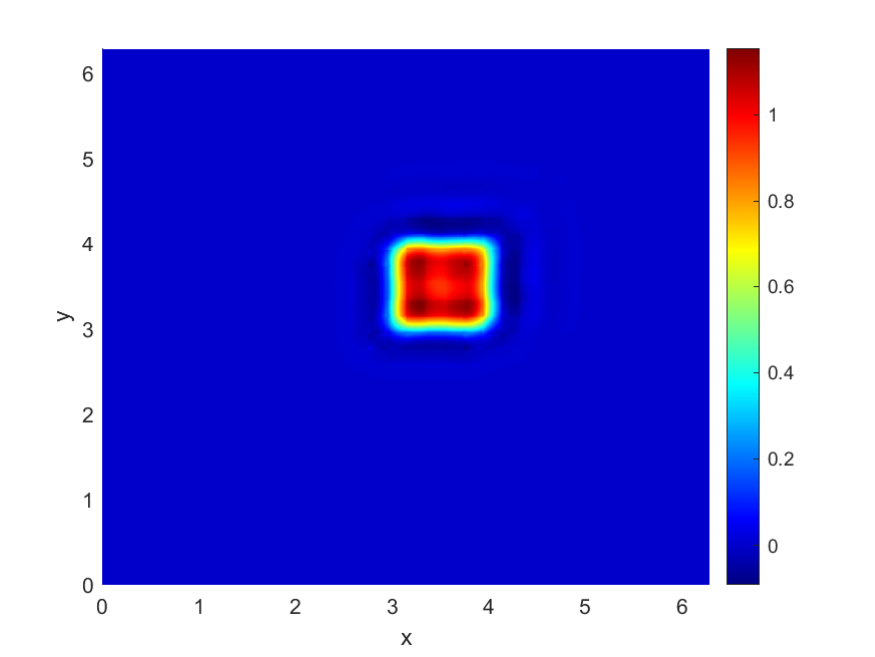} 
\caption{The $\mathcal{{P}}_2$ DG scheme without limiting. It is clear that there are violations in the bounds.}
\label{fig:shock_ex2_2}
\end{figure} 

\begin{figure}[htb!]
\centering
\begin{subfigure}[h]{0.49\linewidth}
\includegraphics[ scale = 0.4]{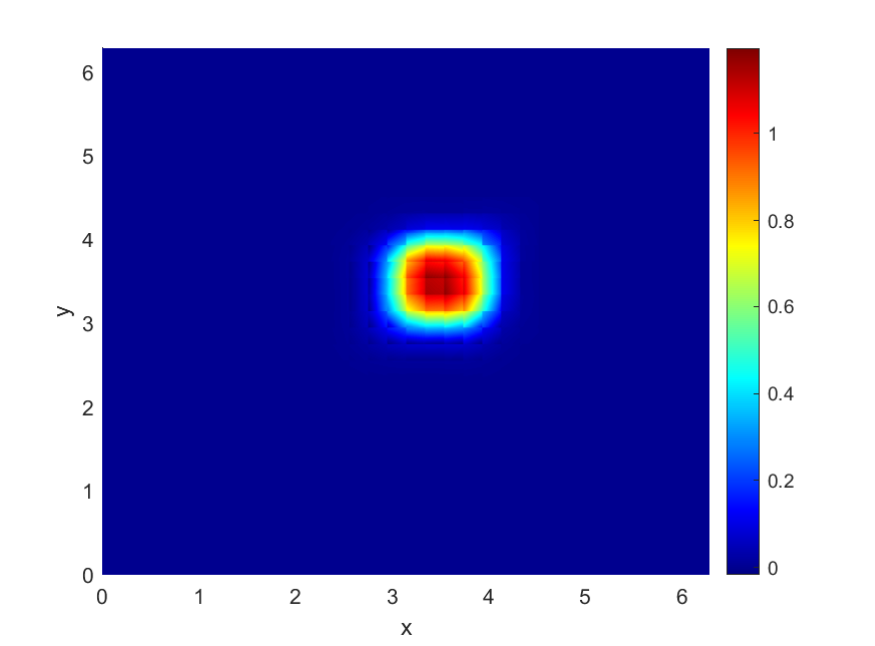}
\caption{$\mathcal{\widetilde{P}}_1$ DG method (limiting)}
\end{subfigure}
\begin{subfigure}[h]{0.49\linewidth}
\includegraphics[ scale = 0.4]{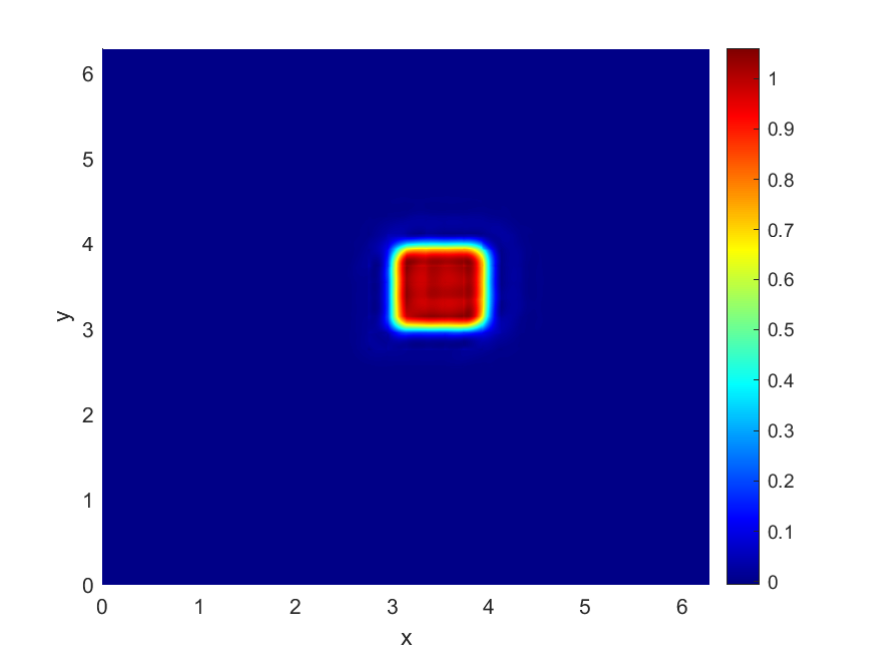}
\caption{$\mathcal{\widetilde{P}}_2$ DG method (limiting)}
\end{subfigure}%
\\%
\begin{subfigure}[h]{0.49\linewidth}
\includegraphics[ scale = 0.4]{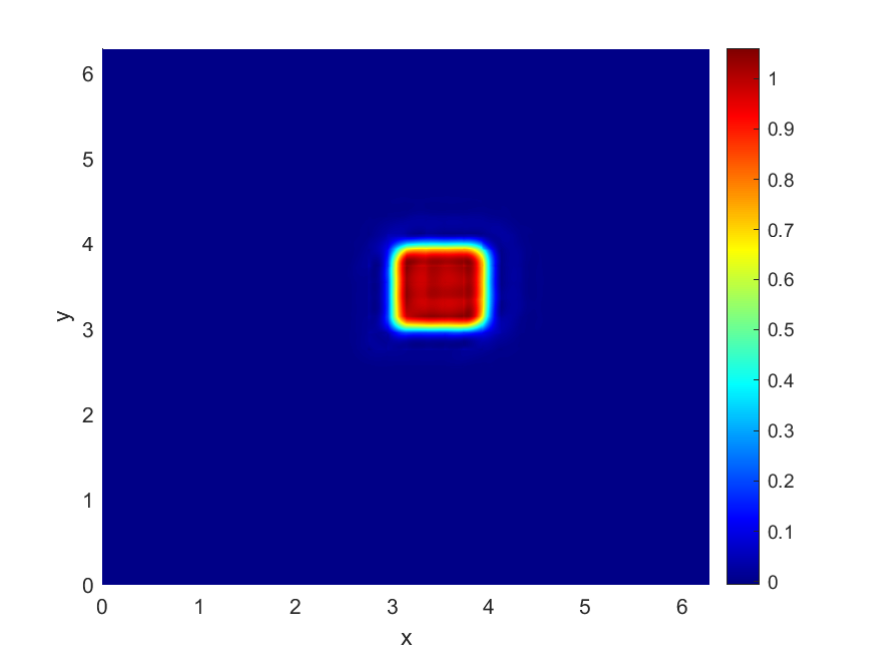}
\caption{$\mathcal{\widetilde{P}}_3$ DG method (limiting)}
\end{subfigure}
\begin{subfigure}[h]{0.49\linewidth}
\includegraphics[ scale = 0.4]{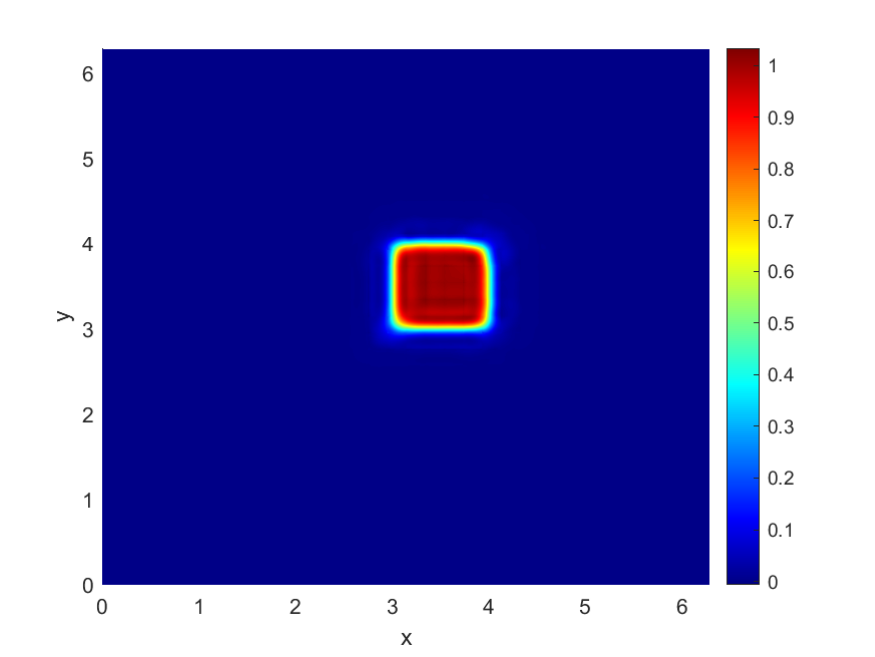}
\caption{$\mathcal{\widetilde{P}}_4$ DG method (limiting)}
\end{subfigure}%
\caption{
Results of the optimization based augmentation and simple scaling limiter, for $k=1,2,3,4$.
}
\label{fig:shock_ex2}
\end{figure}

\subsection{Computational performance}
The previous examples focus on verification and validation of the optimization-based enrichment technique. In this section, we focus on the computational performance. We make a few remarks before presenting the test problem and results.
 \begin{enumerate} 
 \item 
If the cell average from the unmodulated DG scheme using $\mathcal{ {X}}_k(S_{ij\ell})$ is negative, we apply Problem~\eqref{eq:optimization_algorithm3D} (or Problem~\eqref{eq:optimization_algorithm}). Otherwise, for all untroubled cells, there is no need to apply the optimization procedure.

\item 
The model PDE has unidirectional flow, which is known before the simulation starts (for example, the parameters $\alpha,\beta$ from \eqref{eq:model3D} are assumed to be given).  This means we can pre-process suitable augmented basis functions sufficient for the expected range of parameters. For instance, in Table~\ref{table_example_section_ex_S2}, several augmented basis functions have been pre-computed which are valid under a fixed CFL condition. 

\item   Since we know the CFL condition before the simulation begins, we do not require to run the optimization ``on the fly'' for each cell independently. Rather, similar to computing quadrature points on a reference element, we run numerically solve problem~\eqref{eq:optimization_algorithm3D} once, at the start of the simulation, for a given CFL condition. We then augment the standard DG spaces with this function on troubled cells.

\item We do not focus on enhancing the performance the method. For example, the code neither leverages parallelism nor exploits any tensor product structures. Default options are passed into the numerical interior point optimizer.  Additional attention to these details can further improve the performance.
\end{enumerate}
Keeping the above remarks in mind, we treat the construction of the augmented basis function as a pre-processing step - not too dissimilar from mesh generation or quadrature construction on reference cells.
 
In all computational experiments, we solve the optimization problems (Problems~\eqref{eq:optimization_algorithm3D} and ~\eqref{eq:optimization_algorithm}) approximately via interior point methods for nonlinear programming \cite{byrd2000trust}. We focus on the reference cell $[-1,1]^2$ in 2D and $[-1,1]^3$ in 3D, as is standard in many finite element computations \cite{quarteroni2001numerical}.  The parameter space $\Omega_\delta = (\alpha,\beta,\omega,\gamma,\Delta x, \Delta y,\Delta z)$ is sampled so that the augmented basis functions are valid for a wide range of CFL numbers (see Fig.~\ref{fig:cfl_ex}).  
\begin{table}[htb!]
\centering
\begin{tabular}{|c|c|c|}
\hline
$k$ & \(\widetilde{\mathcal{P}}_k\)   & \(\widetilde{\mathcal{Q}}_k\)  \\
\hline
1 & \(7.5836E-01\) & \(9.6983E-01\) \\
2 & \(9.9454E-01\) & \(2.7070E+00\) \\
3 & \(1.8614E+00\) & \(7.2597E+00\) \\
4 & \(4.3692E+00\) & \(1.5475E+01\) \\
5 & \(1.0319E+01\) & \(3.9193E+01\) \\
6 & \(1.9814E+01\) & \(1.1344E+02\) \\
7 & \(5.3975E+01\) & \(3.1415E+02\) \\
8 & \(8.3156E+01\) & \(5.2043E+02\) \\
\hline
\end{tabular}
\caption{Computation times in seconds for \( \widetilde{\mathcal{P}}_k \) and \( \widetilde{\mathcal{Q}}_k \) in 2D for various polynomial degrees. Here, the augmented functions are enforced to be valid over a large parameter space $\Omega_\delta = (\alpha,\beta,\gamma,\Delta x, \Delta y)$.}
\label{table_comp2}
\end{table}
The computational results in 2D can be found in Table~\ref{table_comp2}.  We see that as the polynomial degree $k$ increases, the time to find a suitable basis function also increases. It should be emphasized that these timings are done once at the start of the simulation, and are of little overhead (compared to mesh generation, for instance).  The augmented tensor product space $\widetilde{\mathcal{Q}}_k$ takes longer to compute than \(\widetilde{\mathcal{P}}_k\) since it has more degrees of freedom (and thus, more parameters to search for, see the vector $\vec{d}$ in Problem~\eqref{eq:optimization_algorithm}). 
\begin{table}[htb!]
\centering
\begin{tabular}{|c|c|c|}
\hline
$k$ & \(\widetilde{\mathcal{P}}_k\)   & \(\widetilde{\mathcal{Q}}_k\)  \\
\hline
1&     $6.524E-04$ &$8.6006E-04$\\
2&     $3.584E-03$ &$5.8656E-03$\\
3&     $4.441E-03$ &$8.6818E-03$\\
4&     $5.224E-03$ &$1.1875E-02$\\
5&     $8.226E-03$ &$2.1318E-02$\\
6&     $9.272E-03$ &$2.6980E-02$\\
7&     $1.961E-02$ &$6.3304E-02$\\
8&     $9.273E-02$ &$3.2886E-01$\\
\hline
\end{tabular}
\caption{Computation times in seconds for \( \widetilde{\mathcal{P}}_k \) and \( \widetilde{\mathcal{Q}}_k \) in 2D for various polynomial degrees. Here, the augmented functions are enforced to be valid over a \textit{fixed} parameter space $\Omega_\delta = (\alpha,\beta,\gamma,\Delta x, \Delta y)$.  For simplicity, $\alpha=\beta=\gamma=1$, $\Delta x= \Delta y = 2$.}
\label{table_comp3}
\end{table} 
On the other hand, if we use Problem~\eqref{eq:optimization_algorithm}) ``on the fly'', meaning we invoke the optimization problem on each troubled cell, then the timings are significantly reduced - because the augmented function needs to only hold for a \textit{fixed} parameter space $\Omega_\delta = (\alpha,\beta,\omega,\gamma,\Delta x, \Delta y,\Delta z)$. Table~\ref{table_comp3} displays the timings in this case.
 
In Table~\ref{table_comp4}, we showcase the timings in 3D. It is evident that the optimization is more expensive. For larger polynomial degrees, the algorithm starts to stagnate more, due to poor initial guesses for the nonlinear optimizer. Also, each iteration is of course more expensive in 3D, which contributes to the poor scaling.
\begin{table}[h]
\centering
\begin{tabular}{|c|c|c|}
\hline
$k$ & \(\widetilde{\mathcal{P}}_k\)  & \(\widetilde{\mathcal{Q}}_k\) \\
\hline
1 & \(1.5167E+00\)  & \(1.9397E+00\)  \\
2 & \(2.9836E+00\)  & \(8.1212E+00\)  \\
3 & \(7.4456E+00\)  & \(2.9039E+01\)  \\
4 & \(2.1846E+01\)  & \(7.7374E+01\)  \\
5 & \(6.1914E+01\)  & \(2.3516E+02\)  \\
6 & \(1.3869E+02\)  & \(5.7435E+02\)  \\
7 & \(4.3180E+02\)  & \(7.1354E+02\)  \\
8 & \(8.4840E+02\)  & \(1.5274E+03\)  \\
\hline
\end{tabular}
\caption{Computation times in seconds for \( \widetilde{\mathcal{P}}_k \) and \( \widetilde{\mathcal{Q}}_k \) in 2D for various polynomial degrees. Here, the augmented functions are enforced to be valid over a large parameter space $\Omega_\delta = (\alpha,\beta,\omega,\gamma,\Delta x, \Delta y,\Delta z)$.}
\label{table_comp4}
\end{table}

\section{Conclusion}
	A high-order positivity-preserving DG scheme is presented in this paper.  In~\cite{ling2018conservative}, a second order accurate positivity-preserving DG scheme is studied.  In the two dimensional case, they augment the traditional DG space $\mathcal{P}_1$ with an additional basis function, so that the cell average of the unmodulated DG space remains non-negative.  Our method in this paper extends the results found in~\cite{ling2018conservative} to polynomial degrees greater than one.  To do so, a similar premise is assumed.  A standard space such $\mathcal{X}_k$ ($\mathcal{P}_k$, $\mathcal{Q}_k$, or $\mathcal{S}_k$) is augmented with an anzats function $\psi\in \mathcal{X}_r$ for some $r>k$.  Then, we numerically search for a suitable function $\psi$ through a constrained nonlinear optimization problem.  When $\psi$ is augmented to the standard DG space $\mathcal{X}_k$, it will preserve a positive cell average for the unmodulated DG solution.  As the resulting cell average is guaranteed to be positive, the entire solution can be made positive by applying a simple scaling limiter~\cite{zhang2010maximum}.  This limiter scales the solution towards the cell average without reducing its formal accuracy~\cite{zhang2010maximum}.  
	
	We observe numerically that the method has good performance for any CFL condition (arbitrarily large or small).  It is also feasible to obtain a single augmented basis function that is valid for a wide range of CFL conditions; several explicit basis functions were presented.  The augmented DG space involves additional computational cost.  Some of this cost may be mitigated by using an adaptive scheme, where $\mathcal{X}_k$ is used when the cell average is positive, and the augmented space is only used when the cell average from $\mathcal{X}_k$ is negative. Moreover, this technique may be used to preprocess suitable augmented basis functions, as opposed to using Problem~\eqref{eq:optimization_algorithm} on the fly.
	
	Numerical experiments are conducted in 2D and 3D, and they demonstrate that the method works as expected for polynomial degrees $k>1$.  Several standard DG spaces are also explored, such as $\mathcal{P}_k$, $\mathcal{Q}_k$, and the serendipity spaces $\mathcal{S}_k$.  Additionally, careful convergence studies are provided. 
	
	In future work we plan to examine extension of the presented scheme for nonlinear problems, applications such as radiative transport, and unstructured meshes.

\noindent \textbf{Funding} This study was funded by the University of Wisconsin-Madison.\newline
\noindent \textbf{Conflicts of interest/Competing interests} All authors have no conflicts of interest.\newline
\noindent \textbf{Data Availability} Enquirers about data availability should be directed to the authors\newline
\section*{Declarations}
\textbf{Conflict of interest} The author has no competing interests to declare that are relevant to the content of this article.
   
\section*{Appendix A} \label{section:Q2_proof}
Below (see~\eqref{eq_app_A_Q2_basis}) are the tensor product Bernstein polynomials used as a basis for $\mathcal{Q}_2$.
\begin{subequations}
\begin{align} 
\phi_1 (\xi,\eta) &= (\eta/2 - 1/2)^2 (\xi/2 - 1/2)^2
\\
\phi_2 (\xi,\eta) &=-(\eta/2 - 1/2) (\xi/2 - 1/2)^2 (\eta + 1)
 \\
\phi_3 (\xi,\eta) &=(\eta/2 + 1/2)^2 (\xi/2 - 1/2)^2
 \\
\phi_4 (\xi,\eta) &=-(\eta/2 - 1/2)^2 (\xi/2 - 1/2) (\xi + 1)
 \\
\phi_5 (\xi,\eta) &=(\eta/2 - 1/2)(\xi/2 - 1/2)(\eta + 1)(\xi + 1)
  \\
\phi_6 (\xi,\eta) &=-(\eta/2 + 1/2)^2(\xi/2 - 1/2)(\xi + 1)
 \\
\phi_7 (\xi,\eta) &=(\eta/2 - 1/2)^2(\xi/2 + 1/2)^2
\\
\phi_8 (\xi,\eta) &= -(\eta/2 - 1/2)(\xi/2 + 1/2)^2(\eta + 1)
  \\
\phi_9 (\xi,\eta) &=(\eta/2 + 1/2)^2(\xi/2 + 1/2)^2
\end{align}
\label{eq_app_A_Q2_basis}
\end{subequations} 
\subsection*{Case for $B< 0.5$, $\gamma = 0$}
The purpose of this section is to provide analytic evidence that $ \{\psi_{1,\mathcal{Q}_2}\}$ (see \ref{table_example_section_ex_Q2}) augmented to $\mathcal{Q}_2$ ensures a non-negative cell average provided that $B<1/2$ and $\gamma=0$.  Recall $\hat{S}=[-1,1]^2$ is the reference cell. Write $\widetilde{\mathcal{Q}}_2 = \mathcal{Q}_2 \cup \text{span}\{\psi_{1,\mathcal{Q}_2}\}$.  Express the solution to the linear system arising from
\begin{equation} 
\mathcal{L}(u,v) = \iint_{\widehat{S}} u\,d\xi d\eta ,~~~\forall u \in \widetilde{\mathcal{Q}}_2(\widehat{S}),
\label{eq_anchor}
\end{equation}
as
\[
v(\xi,\eta) = \sum_{i=1}^{9} c_i \phi_i(\xi,\eta) + c_{10}\psi_{1,\mathcal{Q}_2}(\xi,\eta),
\] 
where $\psi_{1,\mathcal{Q}_2}$ is the augmented basis function. With the choice $\psi_{1,\mathcal{Q}_2}= (\xi \eta^2(1-\xi)^2(\xi + 1)(1-\eta^2)/8)^2$, and $\phi_i$ are defined in~\eqref{eq_app_A_Q2_basis}, solving the linear system arising from~\eqref{eq_anchor} generates the following coefficients:
\begin{align*}
c_1 &= \frac{\Delta x}{(B+1)\Lambda} (2020 B^7 + 5718 B^6 + 28920 B^5 + 29148 B^4 + 60834 B^3 + 42181 B^2 + 9265 B + 2350)\\
c_2 &= \frac{\Delta x}{(B+1)\Lambda} (1510 B^7 + 4364 B^6 + 11517 B^5 + 12607 B^4 + 30325 B^3 + 42821 B^2 + 23365 B + 2350)\\
c_3 &= \frac{\Delta x}{(B+1)\Lambda} (               - 190 B^6 + 1944 B^5 - 2174 B^4 - 9774 B^3 - 5889 B^2 + 2215 B + 2350)\\
c_4 &= \frac{\Delta x}{(B+1)\Lambda} (  2020 B^7 + 20838 B^6 + 37248 B^5 + 38336 B^4 + 29439 B^3 + 2828 B^2 - 955 B + 1175)\\
c_5 &= \frac{\Delta x}{(B+1)\Lambda} ( 1110 B^7 + 5434 B^6 + 22499 B^5 + 53315 B^4 + 46611 B^3 + 26933 B^2 + 9545 B + 1175)\\
c_6 &= \frac{\Delta x}{(B+1)\Lambda} (                  400 B^6 - 3180 B^5 - 2241 B^4 + 7958 B^3 + 6013 B^2 - 955 B + 1175)\\
c_7 &= \frac{\Delta x}{(B+1)\Lambda} (             2020 B^7 + 658 B^6 - 7744 B^5 - 8846 B^4 - 4256 B^3 + 1595 B^2 + 3675 B)\\
c_8 &= \frac{\Delta x}{(B+1)\Lambda} (              960 B^7 + 5004 B^6 + 5566 B^5 - 5072 B^4 - 5728 B^3 + 815 B^2 - 1575 B)\\
c_9 &= \frac{\Delta x}{(B+1)\Lambda} (                       B(- 110 B^5 + 1556 B^4 + 3632 B^3 + 1640 B^2 + 835 B + 3675))\\
c_{10} &= \frac{\Delta x}{4\Lambda} (                        -121275 B(50 B^5 - 70 B^4 + 639 B^3 + 317 B^2 + 95 B - 525))
\\
\Lambda &= (2020 B^7 + 8228 B^6 + 14544 B^5 + 19129 B^4 + 22737 B^3 + 19926 B^2 + 10440 B + 2350)
\end{align*}
Using the assumption $B\le 0.5$, we can show $c_i\ge0$ for $i\neq 8$.  Note that $\Lambda>0$, as it is a polynomial with positive coefficients. Therefore, the denominators of $c_i$ are  positive. To inspect the sign of $c_i$, it suffices to examine the univariate polynomial terms (functions of $B$) in the numerator. Define these polynomials as follows: $p_i(B) = ((B+1)\Lambda/\Delta x) c_i$ for $1\le i \le 9$ and $p_{10}(B) =   (4\Lambda/\Delta x) c_{10}$.

It is straightforward to verify that $p_1, p_2, p_5$ are the sum of non-negative terms. The following lower bounds are immediate:
\begin{align*}
p_4&\ge - 955 B + 1175\ge 0,\\
p_6&\ge  - 3180 B^5 - 2241 B^4 - 955 B + 1175\ge0,\\
p_9&\ge B(3675 - 110B^5) \ge0.
\end{align*}
The remaining polynomials $p_3, p_7, p_8$, and $p_{10}$ are also non-negative (see Figure~\ref{eq_anchor}). Thus, $c_8$ is the only coefficient that is non-positive.
\begin{figure}[htb!]
    \centering
    \begin{subfigure}[b]{0.35\textwidth}
        \centering
        \includegraphics[width=\textwidth]{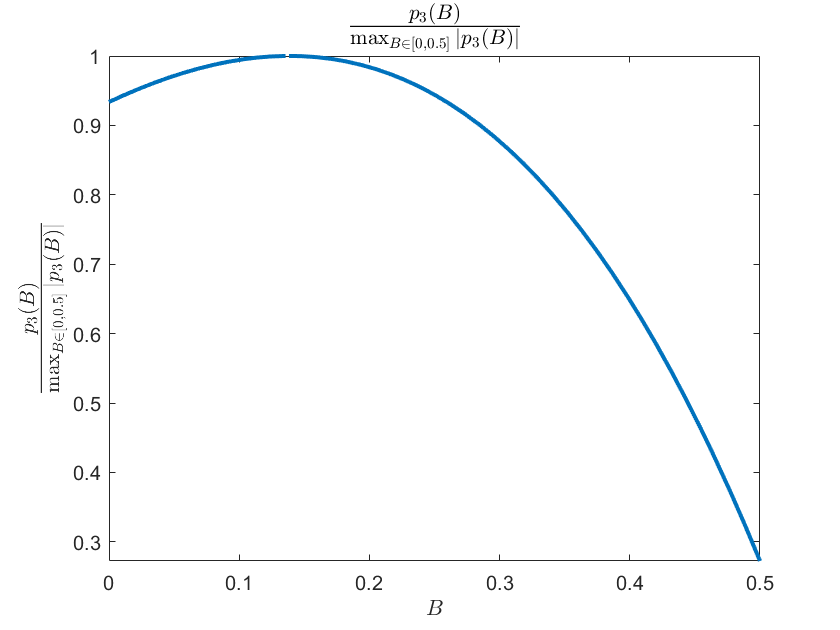}
        \caption[]{{\small $p_3$ normalized}}
        \label{fig:Q2_coeffs_c3}
    \end{subfigure}
    \begin{subfigure}[b]{0.35\textwidth}
        \centering
        \includegraphics[width=\textwidth]{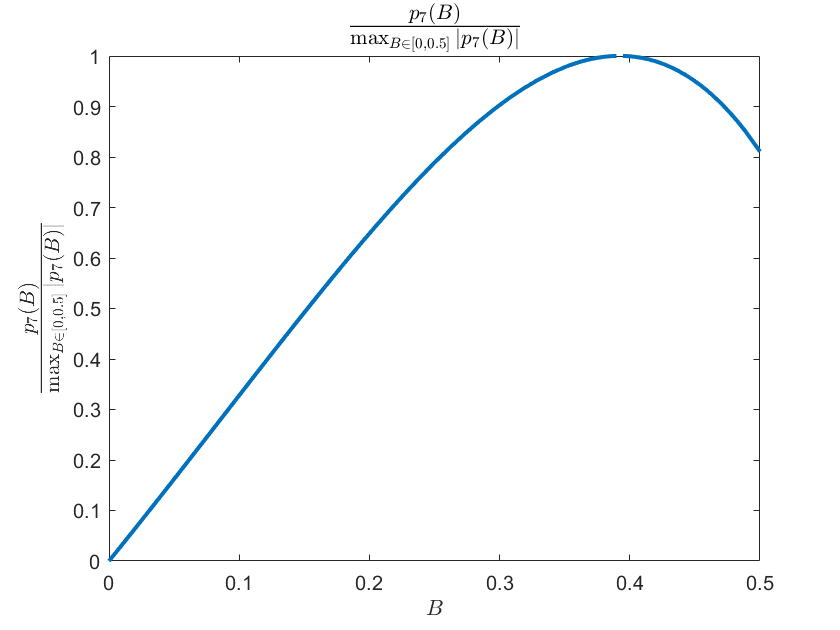}
        \caption[]{{\small $p_7$ normalized}}
        \label{fig:Q2_coeffs_c7}
    \end{subfigure}
\vskip\baselineskip
    \begin{subfigure}[b]{0.35\textwidth}
        \centering
        \includegraphics[width=\textwidth]{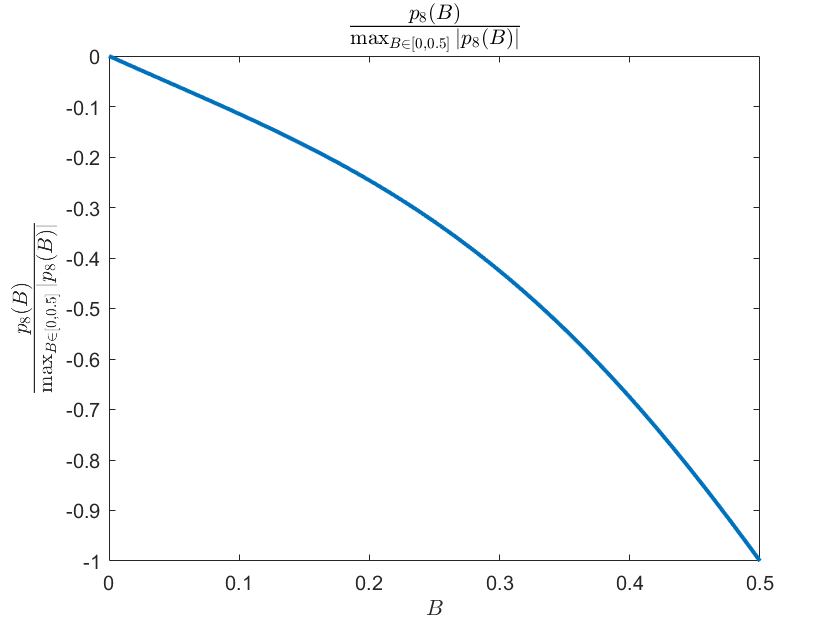}
        \caption[]{{\small $p_8$ normalized}}
        \label{fig:Q2_coeffs_c8}
    \end{subfigure}
    \begin{subfigure}[b]{0.35\textwidth}
        \centering
        \includegraphics[width=\textwidth]{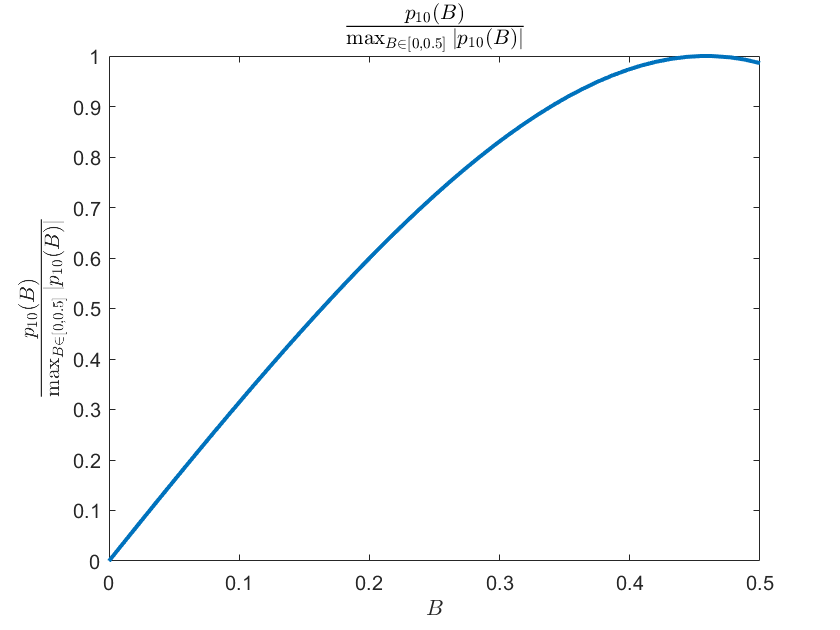}
        \caption[]{{\small $p_{10}$ normalized}}
        \label{fig:Q2_coeffs_c10}
    \end{subfigure}
\caption{Case of $B\le 0.5$, $G=\gamma =0$. Plots of $\frac{p_i(B)}{\max_{B\in [0,0.5]} |p_i(B)|}$ for $i=3,7,8,10$.}
\label{fig:Q2_coeffs}
\end{figure} 
With this result, we have
\begin{align} 
v(\xi,\eta) &= \sum_{i=1}^{9} c_i \phi_i(\xi,\eta) + c_{10}\psi_{1,\mathcal{Q}_2}(\xi,\eta) \notag
\\
	   &\ge  c_9 \phi_9(\xi,\eta) + c_8 \phi_8(\xi,\eta) \notag
	   \\
	   &= \frac{1}{4}(\xi+1)^2  \bigg( c_9 \frac{(1+\eta)(1+\eta)}{4} + c_8 \frac{(1+\eta)(1 - \eta)}{2} \bigg) \notag
	   \\
	   &= \frac{1}{4}(\xi+1)^2  w(\eta) , \label{eq:Q2_inequality}
\end{align}
where $w(y) = c_9 \frac{(1+\eta)(1+\eta)}{4} + c_8 \frac{(1+\eta)(1 - \eta)}{2}$, and we used $c_i\ge 0$ for $i\neq 8$ and the non-negativity of the basis functions. The quadratic function $w$ is concave because $w''(\eta) = \frac{c_9}{2} - c_8\ge 0$. Moreover, $w(\pm 1)\ge 0$, therefore, $w(\eta) \ge0$. From~\eqref{eq:Q2_inequality}, it is evident that $v(\xi,\eta)\ge 0$.       

\section*{Appendix B: $\mathcal{Q}_3 $ and $\mathcal{Q}_4$ augmentation}\label{section:Qk_example} 
This appendix contains examples of augmented basis functions for $\mathcal{Q}_k$ generated from Problem~\eqref{eq:optimization_algorithm}. Table \ref{table_example_section_ex_Q2} is duplicated here for convenience.
\begin{table}[htb!]
  \centering 
  \begin{tabular}{clc}
    \toprule
 Condition to enforce $\overline{u}_{ij}\ge0$ & Augmented basis function $\psi$                        & Original space \\
    \midrule  
  $(\beta/\alpha)\Delta x/\Delta y < \frac{1}{2}$ & $\psi_{1,\mathcal{Q}_2} =  (xy^2(1-x)^2(x + 1)(1-y^2)/8)^2 $ & $\mathcal{Q}_2$   \\
  $(\beta/\alpha)\Delta x/\Delta y > 2$           & $\psi_{2,\mathcal{Q}_2} = (y x^2 (1-y)^2 (y + 1)   (1-x^2)/8)^2$ & $\mathcal{Q}_2$\\  
  $(\beta/\alpha)\Delta x/\Delta y < \frac{1}{4}$ & $\psi_{1,\mathcal{Q}_3}  $ (see \eqref{eq:Appendix_Q3_psi_1})
  & $\mathcal{Q}_3$   \\ 
  $(\beta/\alpha)\Delta x/\Delta y > 4$           & $\psi_{2,\mathcal{Q}_3}  $ (see \eqref{eq:Appendix_Q3_psi_2}) & $\mathcal{Q}_3$\\   
  
  $(\beta/\alpha)\Delta x/\Delta y < \frac{1}{8}$ & $\psi_{1,\mathcal{Q}_4}  $ (see \eqref{eq:Appendix_Q4_psi_1}) & $\mathcal{Q}_4$   \\ 
  $(\beta/\alpha)\Delta x/\Delta y > 8$           & $\psi_{2,\mathcal{Q}_4}  $ (see \eqref{eq:Appendix_Q4_psi_2}) & $\mathcal{Q}_4$\\    
    \bottomrule
  \end{tabular}
  \caption{Example augmented basis functions for $\mathcal{Q}_k$ generated by Problem~\ref{eq:optimization_algorithm}.  
  }
\end{table}
{{
\setlength{\abovedisplayskip}{10pt}
\setlength{\belowdisplayskip}{10pt} 
\begin{align}
r_1(\xi,\eta) &=
 (\xi - \sqrt{5}/5)  (\eta - \sqrt{5}/5)  (\eta + \sqrt{5}/5)  (\xi - 1)  (\xi + 1)  (\eta - 1)
\\
r_2(\xi,\eta) &=
 (\xi + \sqrt{5}/5)  (\eta - \sqrt{5}/5)  (\eta + \sqrt{5}/5)  (\xi - 1)  (\xi + 1)  (\eta + 1)
\\
r_3(\xi,\eta) &=   (\xi - \sqrt{5}/5)  (\xi + \sqrt{5}/5)  (\eta - \sqrt{5}/5)  (\xi - 1)  (\eta - 1)  (\eta + 1)
\\
r_4(\xi,\eta) &=
   (\xi - \sqrt{5}/5)  (\xi + \sqrt{5}/5)  (\eta + \sqrt{5}/5)  (\xi + 1)  (\eta - 1)  (\eta + 1)
\\
\psi_{1,\mathcal{Q}_3}(\xi,\eta) &= ((25  \sqrt{5} )^2r_1(\xi,\eta) r_2(\xi,\eta)/8^4)^2 
\label{eq:Appendix_Q3_psi_1}
\\
\psi_{2,\mathcal{Q}_3}(\xi,\eta) &= ((25  \sqrt{5} )^2r_3(\xi,\eta) r_4(\xi,\eta)/8^4)^2
\label{eq:Appendix_Q3_psi_2}
\end{align}
Below is the augmented function ($\psi_{1,\mathcal{Q}_4}$, \eqref{eq:Appendix_Q4_psi_1})) for $\mathcal{Q}_4$ with $B<1/8$. 
\begin{align}
w_1(\xi) &=\xi  (\xi - 1) (\xi - (\sqrt{3}  \sqrt{7})/7) (\xi + (\sqrt{3}  \sqrt{7})/7)
\\
w_2(\eta) &=  \eta   (\eta - 1)       (\eta - (\sqrt{3}  \sqrt{7})/7)  (\eta + (\sqrt{3}  \sqrt{7})/7)
\\
w_3(\xi ) &= \xi (\xi - 1)  (\xi + 1)(\xi - (\sqrt{3}  \sqrt{7})/7) 
\\
w_4(\eta) &= \eta    (\eta + 1)    (\eta - (\sqrt{3}  \sqrt{7})/7)  (\eta + (\sqrt{3}  \sqrt{7})/7)
\\
r_1(\xi,\eta) &=   
(49 w_1(\xi) w_2(\eta)  )/64
 \\
r_2(\xi,\eta) &=
-(343   w_3(\xi )  w_4(\eta) )/192
\\
\psi_{1,\mathcal{Q}_4}(\xi,\eta) &= (r_1(\xi,\eta)r_2(\xi,\eta))^2
\label{eq:Appendix_Q4_psi_1}
\end{align} 
Below is the augmented function ($\psi_{2,\mathcal{Q}_4}$, \eqref{eq:Appendix_Q4_psi_2}) for $\mathcal{Q}_4$ with $B>8$.
\begin{align}
w_1(\xi)&=\xi(\xi - 1) (\xi - (\sqrt{3}  \sqrt{7})/7)  (\xi + (\sqrt{3}  \sqrt{7})/7)
\\
w_2(\eta)&=\eta    (\eta - 1)     (\eta - (\sqrt{3}  \sqrt{7})/7)  (\eta + (\sqrt{3}  \sqrt{7})/7)
\\
w_3(\xi)&=\xi(\xi + 1)(\xi - (\sqrt{3}  \sqrt{7})/7)  (\xi + (\sqrt{3}  \sqrt{7})/7) 
\\
w_4(\eta)&=\eta     (\eta - 1)  (\eta + 1)    (\eta - (\sqrt{3}  \sqrt{7})/7)
\\
r_1(\xi,\eta) &=
(49  w_1(\xi)w_2(\eta)    )/64
 \\
r_2(\xi,\eta) &=
-(343  w_3(\xi)w_4(\eta)    )/192
\\
\psi_{2,\mathcal{Q}_4}(\xi,\eta) &= (r_1(\xi,\eta)r_2(\xi,\eta))^2
\label{eq:Appendix_Q4_psi_2}
\end{align} 
}}

\section*{Appendix C: Augmentation for $\mathcal{S}_3$ and $\mathcal{S}_4$} \label{section:serendipity_example}
This appendix contains examples of augmented basis functions for $\mathcal{S}_k$ generated from Problem~\eqref{eq:optimization_algorithm}. Table \ref{table_example_section_ex_S2} is included here for convenience.
\begin{table}[htb!]
  \centering 
  \begin{tabular}{clc}
    \toprule
 Condition to enforce $\overline{u}_{ij}\ge0$ & Augmented basis function $\psi$                        & Original space \\
    \midrule  
  $(\beta/\alpha)\Delta x/\Delta y < \frac{1}{2}$ & $\psi_{1,\mathcal{S}_2} = 
  (
\xi  \eta^2  (\xi - 1)^2  (\eta - 1)    (\xi + 1)  (\eta + 1)/8
  )^2$ 
  & $\mathcal{S}_2$   \\
  $(\beta/\alpha)\Delta x/\Delta y > 2$           & 
  $\psi_{2,\mathcal{S}_2} 
  = (
\xi^2  \eta  (\xi - 1)  (\eta - 1)^2 
  (\xi + 1)  (\eta + 1)/2
  )^2$
   & $\mathcal{S}_2$\\     
  $(\beta/\alpha)\Delta x/\Delta y <\frac{1}{8}$ & $\psi_{1,\mathcal{S}_3} $ (see \eqref{eq:Appendix_S3_psi_1}) & $\mathcal{S}_3$   \\ 
  $(\beta/\alpha)\Delta x/\Delta y > 8$           & $\psi_{2,\mathcal{S}_3}  $ (see \eqref{eq:Appendix_S3_psi_2}) & $\mathcal{S}_3$\\   
  
  $(\beta/\alpha)\Delta x/\Delta y < \frac{1}{16}$ & $\psi_{1,\mathcal{S}_4} $ (see \eqref{eq:Appendix_S4_psi_1}) & $\mathcal{S}_4$   \\ 
  $(\beta/\alpha)\Delta x/\Delta y > 16$           & $\psi_{2,\mathcal{S}_4}  $ (see \eqref{eq:Appendix_S4_psi_2}) & $\mathcal{S}_4$\\    
    \bottomrule
  \end{tabular}
  \caption{Example augmented basis functions for the serendipity spaces $\mathcal{S}_k$ generated by Problem~\ref{eq:optimization_algorithm}. 
  } 
\end{table}
\begin{align}
r_1(\xi,\eta) &= 
  (\xi - \sqrt{5}/5)  (\eta - \sqrt{5}/5)^2  (\xi - 1)^2  (\xi + 1)^2  (\eta - 1)  (\eta + 1)^2  
\\
r_2(\xi,\eta) &=
   \sqrt{5}  (\xi + \sqrt{5}/5)  (\eta + \sqrt{5}/5)   
\\
r_3(\xi,\eta)&=
 (\xi - \sqrt{5}/5)^2  (\eta - \sqrt{5}/5)  (\xi - 1)  (\xi + 1)^2  (\eta - 1)^2  (\eta + 1)^2  
\\
r_4(\xi,\eta)&=
 \sqrt{5}   (\xi + \sqrt{5}/5)  (\eta + \sqrt{5}/5)   
 \\
\psi_{1,\mathcal{S}_3}(\xi,\eta) &= ((5^5/64^2) r_1(\xi,\eta)r_2(\xi,\eta) )^2
\label{eq:Appendix_S3_psi_1}
\\
\psi_{2,\mathcal{S}_3}(\xi,\eta) &= ((5^5/64^2) r_3(\xi,\eta)r_4(\xi,\eta) )^2
\label{eq:Appendix_S3_psi_2}
\end{align}       
\begin{align} 
r_1(\xi,\eta) &= 
  \xi^2 \eta^2 (\xi - 1)^2 (\eta - 1) (\xi - (\sqrt{3} \sqrt{7})/7) (\xi + (\sqrt{3} \sqrt{7})/7)^2 
\\
r_2(\xi,\eta) &=
    (\xi + 1) (\eta + 1) 
 (\eta - (\sqrt{3} \sqrt{7})/7)^2 (\eta + (\sqrt{3} \sqrt{7})/7)^2 
\\ 
r_3(\xi,\eta) &= 
  \xi^2 \eta (\xi - 1) (\xi + 1)^2 (\eta - 1)^2 (\xi - (\sqrt{3} \sqrt{7})/7)^2  
\\
r_4(\xi,\eta) &= 
 (\eta + 1) (\xi + (\sqrt{3} \sqrt{7})/7) (\eta - (\sqrt{3} \sqrt{7})/7)^2 (\eta + (\sqrt{3} \sqrt{7})/7)^2 
 \\
\psi_{1,\mathcal{S}_4}(\xi,\eta) &= ( (343/192) (49/64) r_1(\xi,\eta)r_2(\xi,\eta) )^2   
\label{eq:Appendix_S4_psi_1}
 \\
\psi_{2,\mathcal{S}_4}(\xi,\eta) &= ( (343/192) (49/24) r_3(\xi,\eta)r_4(\xi,\eta) )^2   
\label{eq:Appendix_S4_psi_2}
\end{align}

\bibliographystyle{spmpsci}      
\bibliography{references}

\begin{thebibliography}{10}
\providecommand{\url}[1]{{#1}}
\providecommand{\urlprefix}{URL }
\expandafter\ifx\csname urlstyle\endcsname\relax
  \providecommand{\doi}[1]{DOI~\discretionary{}{}{}#1}\else
  \providecommand{\doi}{DOI~\discretionary{}{}{}\begingroup
  \urlstyle{rm}\Url}\fi

\bibitem{abramowitz1965handbook}
Abramowitz, M., Stegun, I.: Handbook of {M}athematical {F}unctions with
  {F}ormulas, {G}raphs, and {M}athematical {T}ables.
\newblock Applied mathematics series. U.S. Government Printing Office (1965)

\bibitem{arnold2002approximation}
Arnold, D., Boffi, D., Falk, R.: Approximation by quadrilateral finite
  elements.
\newblock Mathematics of computation \textbf{71}(239), 909--922 (2002)

\bibitem{byrd2000trust}
Byrd, R.H., Gilbert, J.C., Nocedal, J.: A trust region method based on interior
  point techniques for nonlinear programming.
\newblock Mathematical programming \textbf{89}(1), 149--185 (2000)

\bibitem{cheng2012positivity}
Cheng, Y., Gamba, I., Proft, J.: Positivity-preserving discontinuous {G}alerkin
  schemes for linear {V}lasov-{B}oltzmann transport equations.
\newblock Mathematics of Computation \textbf{81}(277), 153--190 (2012)

\bibitem{courant1967partial}
Courant, R., Friedrichs, K., Lewy, H.: On the partial difference equations of
  mathematical physics.
\newblock IBM journal of Research and Development \textbf{11}(2), 215--234
  (1967)

\bibitem{fabien2024positivity}
Fabien, M.S.: Positivity-preserving discontinuous galerkin scheme for linear
  hyperbolic equations with characteristics-informed augmentation.
\newblock Results in Applied Mathematics \textbf{22}, 100460 (2024)

\bibitem{guo2015positivity}
Guo, L., Yang, Y.: Positivity preserving high-order local discontinuous
  {G}alerkin method for parabolic equations with blow-up solutions.
\newblock Journal of Computational Physics \textbf{289}, 181--195 (2015)

\bibitem{kontzialis2013high}
Kontzialis, K., Ekaterinaris, J.A.: High order discontinuous {G}alerkin
  discretizations with a new limiting approach and positivity preservation for
  strong moving shocks.
\newblock Computers \& Fluids \textbf{71}, 98--112 (2013)

\bibitem{kovvali2011theory}
Kovvali, N.: Theory and Applications of Gaussian Quadrature Methods.
\newblock Synthesis digital library of engineering and computer science. Morgan
  \& Claypool (2011)

\bibitem{leveque1992numerical}
LeVeque, R.J.: Numerical methods for conservation laws, vol.~3.
\newblock Springer (1992)

\bibitem{li2017positivity}
Li, M., Guyenne, P., Li, F., Xu, L.: A positivity-preserving well-balanced
  central discontinuous {G}alerkin method for the nonlinear shallow water
  equations.
\newblock Journal of Scientific Computing \textbf{71}, 994--1034 (2017)

\bibitem{ling2018conservative}
Ling, D., Cheng, J., Shu, C.W.: Conservative high order positivity-preserving
  discontinuous {G}alerkin methods for linear hyperbolic and radiative transfer
  equations.
\newblock Journal of Scientific Computing \textbf{77}(3), 1801--1831 (2018)

\bibitem{LIU2024113440}
Liu, C., Buzzard, G.T., Zhang, X.: An optimization based limiter for enforcing
  positivity in a semi-implicit discontinuous galerkin scheme for compressible
  {N}avier–{S}tokes equations.
\newblock Journal of Computational Physics \textbf{519}, 113440 (2024)

\bibitem{liu2024optimizationbased}
Liu, C., Hu, J., Taitano, W.T., Zhang, X.: An optimization-based
  positivity-preserving limiter in semi-implicit discontinuous galerkin schemes
  solving {F}okker-{P}lanck equations  (2024).
\newblock \urlprefix\url{https://arxiv.org/abs/2410.19143}

\bibitem{liu2024optimizationbased3}
Liu, C., Riviere, B., Shen, J., Zhang, X.: A simple and efficient convex
  optimization based bound-preserving high order accurate limiter for
  {C}ahn–{H}illiard–navier–stokes system.
\newblock SIAM Journal on Scientific Computing \textbf{46}(3), A1923--A1948
  (2024)

\bibitem{meena2017positivity}
Meena, A.K., Kumar, H., Chandrashekar, P.: Positivity-preserving high-order
  discontinuous {G}alerkin schemes for ten-moment {G}aussian closure equations.
\newblock Journal of Computational Physics \textbf{339}, 370--395 (2017)

\bibitem{nocedal2006numerical}
Nocedal, J., Wright, S.: Numerical optimization.
\newblock Springer Science \& Business Media (2006)

\bibitem{pistikopoulos2020multi}
Pistikopoulos, E., Diangelakis, N., Oberdieck, R.: Multi-parametric
  Optimization and Control.
\newblock Wiley Series in Operations Research and Management Science. Wiley
  (2020)

\bibitem{quarteroni2001numerical}
Quarteroni, A., Sacco, R., Saleri, F.: Numerical Mathematics.
\newblock Texts in Applied Mathematics. Springer New York (2001)

\bibitem{reemtsen2013semi}
Reemtsen, R., R{\"u}ckmann, J.: Semi-Infinite Programming.
\newblock Nonconvex Optimization and Its Applications. Springer US (2013)

\bibitem{rossmanith2011positivity}
Rossmanith, J.A., Seal, D.C.: A positivity-preserving high-order
  semi-{L}agrangian discontinuous {G}alerkin scheme for the {V}lasov--{P}oisson
  equations.
\newblock Journal of Computational Physics \textbf{230}(16), 6203--6232 (2011)

\bibitem{srinivasan2018positivity}
Srinivasan, S., Poggie, J., Zhang, X.: A positivity-preserving high order
  discontinuous {G}alerkin scheme for convection--diffusion equations.
\newblock Journal of Computational Physics \textbf{366}, 120--143 (2018)

\bibitem{van2019positivity}
van~der Vegt, J.J.W., Xia, Y., Xu, Y.: Positivity preserving limiters for
  time-implicit higher order accurate discontinuous {G}alerkin discretizations.
\newblock SIAM journal on scientific computing \textbf{41}(3), A2037--A2063
  (2019)

\bibitem{wen2016application}
Wen, X., Gao, Z., Don, W.S., Xing, Y., Li, P.: Application of
  positivity-preserving well-balanced discontinuous {G}alerkin method in
  computational hydrology.
\newblock Computers \& Fluids \textbf{139}, 112--119 (2016)

\bibitem{XU2022111410}
Xu, Z., Shu, C.W.: High order conservative positivity-preserving discontinuous
  {G}alerkin method for stationary hyperbolic equations.
\newblock Journal of Computational Physics \textbf{466}, 111410 (2022)

\bibitem{XU2023112304}
Xu, Z., Shu, C.W.: On the conservation property of positivity-preserving
  discontinuous {G}alerkin methods for stationary hyperbolic equations.
\newblock Journal of Computational Physics \textbf{490}, 112304 (2023)

\bibitem{yee2019quadratic}
Yee, B.C., Olivier, S.S., Haut, T.S., Holec, M., Tomov, V.Z., Maginot, P.G.: A
  quadratic programming flux correction method for high-order {DG}
  discretizations of {SN} transport.
\newblock arXiv preprint arXiv:1910.02918  (2019)

\bibitem{zhang2019high}
Zhang, M., Cheng, J., Qiu, J.: High order positivity-preserving discontinuous
  {G}alerkin schemes for radiative transfer equations on triangular meshes.
\newblock Journal of Computational Physics \textbf{397}, 108811 (2019)

\bibitem{zhang2016operator}
Zhang, R., Zhu, J., Loula, A.F., Yu, X.: Operator splitting combined with
  positivity-preserving discontinuous {G}alerkin method for the chemotaxis
  model.
\newblock Journal of Computational and Applied Mathematics \textbf{302},
  312--326 (2016)

\bibitem{zhang2010maximum}
Zhang, X., Shu, C.W.: On maximum-principle-satisfying high order schemes for
  scalar conservation laws.
\newblock Journal of Computational Physics \textbf{229}(9), 3091--3120 (2010)

\bibitem{zhao2014positivity}
Zhao, X., Yang, Y., Seyler, C.E.: A positivity-preserving semi-implicit
  discontinuous {G}alerkin scheme for solving extended magnetohydrodynamics
  equations.
\newblock Journal of Computational Physics \textbf{278}, 400--415 (2014)

\end{thebibliography}

\end{document}